\documentclass[11pt,a4paper]{article}
\pdfoutput=1

\usepackage{amssymb}
\usepackage{amsmath}
\usepackage{amsfonts}
\usepackage{bbm}
\usepackage{amsthm}
\usepackage{mathrsfs}
\usepackage{hyperref}
\usepackage{color}
\usepackage[margin=2.35cm]{geometry}
\usepackage[all,cmtip]{xy}
\usepackage[utf8]{inputenc}
\usepackage{graphicx}
\usepackage{varwidth}
\usepackage{comment}
\usepackage{enumitem}

\usepackage{mathtools}

\usepackage{upgreek}
\usepackage{rotating}

\usepackage{tikz}
\usetikzlibrary{shapes.geometric}

\usetikzlibrary{decorations.markings}

\usepackage{tikz-cd}
\tikzset{Rightarrow/.style={double equal sign distance,>={Implies},->},
triple/.style={-,preaction={draw,Rightarrow}},
quadruple/.style={preaction={draw,Rightarrow,shorten >=0pt},shorten >=1pt,-,double,double
distance=0.2pt}}

\mathchardef\mhyphen="2D

%%%%%%

\definecolor{darkred}{rgb}{0.8,0.1,0.1}
\hypersetup{
	colorlinks=true,         % false: boxed links; true: colored links,false is default
	linkcolor=darkred,
	citecolor=blue,
}

\theoremstyle{plain}
\newtheorem{theo}{Theorem}[section]
\newtheorem{lem}[theo]{Lemma}
\newtheorem{propo}[theo]{Proposition}

\theoremstyle{definition}
\newtheorem{defi}[theo]{Definition}

\newenvironment{ex}
{\pushQED{\qed}\exx}
{\popQED\endexx}

\newenvironment{rem}
{\pushQED{\qed}\remm}
{\popQED\endremm}

\numberwithin{equation}{section}

\def\nn{\nonumber}

\def\bbK{\mathbb{K}}
\def\bbR{\mathbb{R}}

\def\bbN{\mathbb{N}}
\def\bbZ{\mathbb{Z}}

\def\hom{\underline{\mathrm{hom}}}

\def\Sym{\mathrm{Sym}}

\def\id{\mathrm{id}}

\def\supp{\mathrm{supp}}

\def\dd{\mathrm{d}}

\def\dR{\mathrm{dR}}

\def\1{I}
\def\oone{\mathbbm{1}}
\def\op{\mathrm{op}}

\def\pr{\mathrm{pr}}

\def\Set{\mathbf{Set}}

\def\CAlg{\mathbf{CAlg}}

\def\dgCAlg{\mathbf{dgCAlg}}

\def\dgMod{\mathbf{dgMod}}

\def\sSet{\mathbf{sSet}}

\def\dgCat{\mathbf{dgCat}}

\def\Fun{\mathbf{Fun}}

\def\2Cat{\mathbf{2Cat}}
\def\3Cat{\mathbf{3Cat}}

\def\Aff{\mathbf{Aff}}
\def\dAff{\mathbf{dAff}}
\def\dPSt{\mathbf{dPSt}}
\def\dSt{\mathbf{dSt}}
\def\Rect{\mathbf{Rect}}

\def\Spec{\mathrm{Spec}}

\def\Ad{\mathrm{Ad}}

\def\g{\mathfrak{g}}

\def\FFF{\mathfrak{F}}

\def\G{\mathcal{G}}

\def\O{\mathcal{O}}
\def\P{\mathcal{P}}

\def\dCrit{\mathrm{dCrit}}
\def\Sol{\mathcal{S}}

\def\holim{\mathrm{holim}}

\def\hocolim{\mathrm{hocolim}}

\def\gf{\mathrm{gf}}

\def\GL{\mathrm{GL}}
\def\gl{\mathfrak{gl}}
\def\Con{\mathrm{Con}}

\def\Perf{\mathrm{Perf}}
\def\QCoh{\mathrm{QCoh}}

\newcommand\und[1]{\underline{#1}}

\makeatletter
\newcommand{\xRrightarrow}[2][]{\ext@arrow 0359\Rrightarrowfill@{#1}{#2}}
\newcommand{\Rrightarrowfill@}{\arrowfill@\equiv\equiv\Rrightarrow}
\newcommand{\xLleftarrow}[2][]{\ext@arrow 3095\Lleftarrowfill@{#1}{#2}}
\newcommand{\Lleftarrowfill@}{\arrowfill@\Lleftarrow\equiv\equiv}
\makeatother

\def\sk{\vspace{2mm}}

\makeatletter
\let\@fnsymbol\@alph
\makeatother

%Checks if some references are not used
%\usepackage{refcheck}
%

%%%%%%%%%%%%%%%%%%%%%%%%%%%%%%%%%%%%
%%%%%%%%%%%%%%%%%%%%%%%%%%%%%%%%%%%%

\title{%
Derived algebraic geometry of 2d lattice Yang-Mills theory
}

\author{%
Marco Benini$^{1,2,a}$, Tom{\'a}s Fern{\'a}ndez$^{3,b}$\ and\ Alexander Schenkel$^{4,5,c}$\vspace{4mm}\\
{\small ${}^1$ Dipartimento di Matematica, Dipartimento di Eccellenza 2023-27, Universit\`a di Genova,}\\
{\small Via Dodecaneso 35, 16146 Genova, Italy.}\vspace{2mm}\\
{\small ${}^2$ INFN, Sezione di Genova,}\\
{\small Via Dodecaneso 33, 16146 Genova, Italy.}\vspace{2mm}\\
{\small ${}^3$~{\'E}cole Normale Sup{\'e}rieure de Lyon,}\\
{\small 15 parvis Ren{\'e} Descartes, 69342 Lyon Cedex 07, France.}\vspace{2mm}\\
{\small ${}^4$~Dipartimento di Matematica, Universit{\`a} di Trento and INFN-TIFPA,}\\
{\small Via Sommarive 14, 38123 Povo (Trento), Italy.}\vspace{2mm}\\
{\small ${}^5$~School of Mathematical Sciences, University of Nottingham,}\\
{\small University Park, Nottingham NG7 2RD, United Kingdom.}\vspace{4mm}\\
{\small
Email: ${}^a$~\href{mailto:marco.benini@unige.it}{\texttt{marco.benini@unige.it}}, 
${}^b$~\href{mailto:tomasfp@outlook.com}{\texttt{tomasfp@outlook.com}},
${}^c$~\href{mailto:alexander.schenkel@unitn.it}{\texttt{alexander.schenkel@unitn.it}}
}
}

\date{June 2026}

%%%%%%%%%%%%%%%%%%%%%%%%%%%%%%%%%%%%
%%%%%%%%%%%%%%%%%%%%%%%%%%%%%%%%%%%%

\begin{document}

\maketitle

\begin{abstract}
\noindent A derived algebraic geometric study of classical $\mathrm{GL}_n$-Yang-Mills theory on the $2$-dimensional square lattice $\mathbb{Z}^2$ is presented. The derived critical locus of the Wilson action is described and its local data supported in rectangular subsets $V =[a,b]\times [c,d]\subseteq \mathbb{Z}^2$ with both sides of length $\geq 2$ is extracted. A locally constant dg-category-valued prefactorization algebra on $\mathbb{Z}^2$ is constructed from the dg-categories of quasi-coherent complexes on the derived stacks of local data.
\end{abstract}
\vspace{-3mm}

\paragraph*{Keywords:} derived algebraic geometry, derived critical locus, lattice gauge theory, dg-categories, prefactorization algebras
\vspace{-2mm}

\paragraph*{MSC 2020:} 14A30, 70S15, 81T25
\vspace{-2mm}

\renewcommand{\baselinestretch}{0.8}\normalsize
\tableofcontents
\renewcommand{\baselinestretch}{1.0}\normalsize

%%%%%%%%%%%%%%%%%%%%%%%%%%%%%%%%%%%%%%%%%%%%%%%%
%%%%%%%%%%%%%%%%%%%%%%%%%%%%%%%%%%%%%%%%%%%%%%%%

\section{\label{sec:intro}Introduction and summary}
Derived algebraic geometry is a powerful refinement
of algebraic geometry in which
one can give a precise geometric meaning to objects
that are problematic in traditional approaches, 
such as quotients by non-free group actions and non-transversal intersections.
This is achieved by simultaneously considering $\infty$-groupoids instead of sets of points
and enlarging the basic building blocks from affine schemes to derived affine schemes, 
whose algebras of functions are allowed to 
carry additional homotopical structure, taking the 
form of commutative dg-algebra structures in non-positive 
cohomological degrees when working over a field of characteristic zero.
The combination of these two refinements makes such singular objects behave smoothly.
The resulting spaces which are obtained by gluing 
derived affine schemes are known as derived stacks.
We refer the reader to \cite{ToenVezzosi,GRbook} for the 
foundations of derived algebraic geometry
and also to \cite{Toen,Calaque,Pridham} for more concise introductions.
\sk

In addition to its intrinsic relevance to the foundations of algebraic geometry,
derived algebraic geometry also has an ever-increasing impact
on other disciplines. For instance, modern approaches to (quantum) field theory,
such as the factorization algebras of Costello and Gwilliam \cite{CG1,CG2}, 
are heavily inspired by the ideas and techniques of derived algebraic geometry.
The relationship between field theory and derived geometry can be seen most directly 
as follows: The main object of interest in a classical field theory is the moduli space
of solutions to some system of partial differential equations
which is usually given by the Euler-Lagrange equations of some action function $S$.
Such moduli spaces are described by the intersection problem $\delta S = 0$
associated with the variation of the action, which is called a derived critical locus. 
These moduli spaces carry in general a non-trivial derived geometric structure
which results from non-transversality of the intersection problem
and non-freeness of the action of gauge symmetries in gauge field theories.
\sk

It is worthwhile to note that most of the current applications
of derived geometry to (quantum) field theory are intrinsically 
perturbative. This means that they do not attempt to 
describe the derived stack encoding the global moduli space
of the field theory, but they focus only on the formal neighborhood 
of a point (i.e.\ a formal moduli problem), interpreted as a background solution around which 
one considers formal perturbations. The main reason for this limitation
is that field theories do not strictly fit into the
standard framework of derived algebraic geometry since their description 
requires functional analytical objects, such as the spaces 
of smooth or distributional sections of vector bundles over a manifold,
which lie outside the scope of this approach. 
Very recently there has been substantial progress towards
generalizing the framework of derived algebraic geometry such 
that it becomes applicable to partial differential equations and 
other kinds of functional analytical objects. Some notable developments 
are the works of Steffens \cite{Steffens1,Steffens2} on 
derived $C^\infty$-algebraic geometry, the work of 
Ben-Bassat, Kelly and Kremnizer \cite{Analytic} on 
derived analytic geometry, and the works of 
Kryczka, Sheshmani and Yau \cite{DerivedPDE1,DerivedPDE2}
on a $D$-module approach to the derived geometry of partial differential equations.
These novel frameworks are however highly technical and abstract,
such that their application to concrete questions in (quantum) field theory
remains an open problem for future works.
\sk

In this work we take a complementary approach which is inspired by
lattice field theory, see e.g.\ \cite{Montvay}. The basic idea is to approximate the underlying
spacetime manifold of a field theory by some discrete structure, 
such as a square lattice $\bbZ^n$. This removes the need for functional
analytical objects such as distribution spaces and it replaces the
partial differential equations from continuum field theory 
with simpler finite-difference equations. The moduli spaces 
associated with such systems thus lie within the scope of 
standard derived algebraic geometry.
The main aim of this paper is to show
that non-perturbative lattice field theories can be described and studied
rather explicitly using the methods of derived algebraic geometry.
For concreteness, we shall focus on the example of classical
$\GL_n$-Yang-Mills theory on the $2$-dimensional square lattice $\bbZ^2$, 
which displays all the main features of interest to us, namely 
a highly non-linear dynamics and a non-Abelian gauge group.
The moduli space for this model is defined as the derived critical locus
of the Wilson action \cite{Wilson}, which is a discretization of the 
Yang-Mills action from continuum gauge theory. 
\sk

We will now explain our results by outlining the content of this paper.
In Section \ref{sec:prelim} we collect some relevant preliminaries.
In Subsection \ref{subsec:DAG} we recall those concepts
of derived algebraic geometry which are necessary for our work, 
including derived affine schemes, derived (quotient) stacks and their dg-categories of quasi-coherent complexes.
In Subsection \ref{subsec:dCrit} we recall the concept of a derived critical
locus and its explicit description from \cite{BSSdCrit} for the
case of a function $S : [X/G]\to \mathbb{A}^1$ on a quotient stack.
Subsection \ref{subsec:latticeYM} provides a very brief introduction to lattice
gauge theory and in particular recalls the Wilson action \cite{Wilson} on $\bbZ^2$ as well as
its Euler-Lagrange equations, which are highly non-linear finite-difference equations.
\sk

In Section \ref{sec:dCrit} we provide a very explicit description
of the global derived critical locus of the Wilson action on $\bbZ^2$
in terms of a derived quotient stack  $\dCrit(S)\simeq \big[Z(\bbZ^2)/\G(\bbZ^2)\big]$,
see in particular \eqref{eqn:OGau}, \eqref{eqn:OZYM} and \eqref{eqn:rhodCrit}.
We will prove in Subsection \ref{subsec:axialgauge} that this derived stack
admits a weakly equivalent description implementing an axial gauge fixing condition,
i.e.\ fixing one of the two components of the connection on $\bbZ^2$ to be trivial.
This turns out to be very useful for understanding the dynamics 
of the lattice Yang-Mills model. In Section \ref{sec:localconstancy}
we extract from the global derived critical locus $\dCrit(S)$ on $\bbZ^2$ the local
data which is supported in rectangular subsets $V = [a,b]\times[c,d]\subseteq \bbZ^2$
with both sides of length $\geq 2$. This defines a functor
$\Sol:\Rect(\bbZ^2)^\op\to\dSt$ from the opposite of the category $\Rect(\bbZ^2)$
of such rectangular subsets and their inclusions to the model category of derived stacks.
The main result of this section is Theorem \ref{theo:locallyconstant} 
in which we prove that this functor is locally constant in the sense that the restriction map
$\Sol(V^\prime)\to\Sol(V)$ is a weak equivalence of derived stacks
for every inclusion  $V\subseteq V^\prime$ of rectangular subsets.
This is a non-trivial and rather technical result which verifies the physical intuition that
``classical $2$-dimensional (lattice) Yang-Mills theory does not contain local propagating degrees of freedom''.
(It is important to stress that this feature holds true only for the classical theory as the 
corresponding quantum theory is area-dependent, see e.g.\ \cite{Witten}.)
\sk

In Section \ref{sec:PFA} we connect our constructions and results to (pre)factorization algebras.
We show that our non-perturbative classical lattice Yang-Mills model defines
a discrete variant of a prefactorization algebra on $\bbZ^2$  
taking values in the $2$-category of dg-categories,
see Definition \ref{def:PFA} for the relevant prefactorization operad.
The reason for the appearance of dg-categories, in contrast to
cochain complexes as for perturbative (pre)factorization algebras
\cite{CG1,CG2}, is that our derived stacks of local data $\Sol(V)$ 
are not affine, i.e.\ they are not faithfully encoded by their dg-algebras
of functions. Instead, our derived stacks $\Sol(V) = \big[Z(V)/\G(V)\big]$ are
quotients satisfying the finiteness conditions 
from \cite[Theorem 2.2.4]{Gaitsgory}, so they are $1$-affine and therefore determined
by their symmetric monoidal dg-category of quasi-coherent complexes
$\QCoh\big(\Sol(V)\big)$. 
(As a consequence of local constancy from Theorem \ref{theo:locallyconstant}, 
this is also true for unbounded rectangular subsets.)
The main result of this section is that this dg-category-valued prefactorization
algebra on $\bbZ^2$ is locally constant with respect to quasi-equivalences
of dg-categories, see Theorem \ref{theo:PFA}.
We would like to emphasize that our prefactorization algebra
describes only the classical non-perturbative observables of
$2$-dimensional lattice Yang-Mills theory and that its quantization
remains an open problem, see Remark \ref{rem:quantization} for further comments.
Theorem \ref{theo:PFA} also creates links between our work and the 
``not too little disks'' algebras which have been
developed recently by Calaque and Carmona \cite{CalaqueCarmona},
see Remark \ref{rem:nottoolittle} for further comments.

%%%%%%%%%%%%%%%%%%%%%%%%%%%%%%%%%%%%%%%%%%%%%%%%
%%%%%%%%%%%%%%%%%%%%%%%%%%%%%%%%%%%%%%%%%%%%%%%%

\section{\label{sec:prelim}Preliminaries}

\subsection{\label{subsec:DAG}Basic derived algebraic geometry}
We will briefly recall some relevant concepts of derived algebraic geometry
which are needed to state and prove the results of our work. We refer
the reader to \cite{ToenVezzosi,GRbook} for details and to
\cite{Toen,Calaque,Pridham} for more concise introductions.
Let us fix once and for all a field $\bbK$ of characteristic $0$.
\sk

The basic objects on which derived algebraic geometry is built
are derived affine schemes, providing
a homological refinement of the ordinary affine schemes from algebraic geometry.
\begin{defi}\label{def:dAff}
The category of \textit{derived affine schemes}
\begin{flalign}
\dAff\,:=\, \big(\dgCAlg^{\leq 0}\big)^\op 
\end{flalign}
is defined as the opposite of the category $\dgCAlg^{\leq 0}$ of 
commutative dg-algebras over $\bbK$
in non-positive cohomological degrees. This means that a morphism 
$f : \Spec(A)\to \Spec(B)$ in $\dAff$ is defined 
by an opposite morphism $f^\ast : B\to A$ in $\dgCAlg^{\leq 0}$.
We endow $\dAff$ with the opposite
of the standard model structure on $\dgCAlg^{\leq 0}$ (see e.g.\ \cite{CDGA}),
i.e.\ a morphism $f : \Spec(A)\to \Spec(B)$ in $\dAff$ is
\begin{itemize}
\item a weak equivalence if its opposite $f^\ast : B\to A$ is a quasi-isomorphism,
\item a cofibration if its opposite $f^\ast: B\to A$ is surjective in all negative degrees $<0$, and
\item a fibration if it has the right-lifting property with respect to all morphisms
that are both a weak equivalence and a cofibration.
\end{itemize}
\end{defi}
\begin{rem}
The evident embedding $\CAlg\to \dgCAlg^{\leq 0}$ of the category 
of commutative $\bbK$-algebras defines an embedding 
\begin{flalign}
\Aff~\longrightarrow~ \dAff
\end{flalign}
of the category of ordinary affine schemes $\Aff:= \CAlg^\op$ into 
the model category of derived affine schemes from Definition \ref{def:dAff}.
Hence, every ordinary affine scheme gives rise to a derived affine scheme.
\end{rem}

Derived affine schemes are insufficient to describe certain important 
geometric objects, such as quotients.
This issue can be resolved by enlarging the model category $\dAff$ from Definition \ref{def:dAff}
to the model category $\dSt$ of derived stacks from \cite{ToenVezzosi}. Loosely speaking,
a derived prestack is a simplicial presheaf $X : \dAff^\op \to \sSet$
which sends weak equivalences in $\dAff^\op = \dgCAlg^{\leq 0}$ 
to weak equivalences in the Kan-Quillen model structure on the category of simplicial sets $\sSet$.
A derived stack is a derived prestack which additionally 
satisfies hyperdescent with respect to {\'e}tale hypercovers of derived affines.
\begin{defi}\label{def:dSt}
The model category of \textit{derived prestacks}
\begin{flalign}
\dPSt\,:=\, \mathcal{L}_{\widehat{W}}\,\mathbf{sPSh}(\dAff)
\end{flalign}
is defined as the left Bousfield localization
of the projective model structure on the category 
of simplicial presheaves $\mathbf{sPSh}(\dAff) := \Fun(\dAff^\op,\sSet)$
at the set of morphisms $\widehat{W}$ given by the image under the (discrete) 
Yoneda embedding of the weak equivalences $W$ in $\dAff$, see \cite[Section 1.3.1]{ToenVezzosi}.
The model category of \textit{derived stacks}
\begin{flalign}
\dSt\,:=\, \mathcal{L}_{\text{{\'e}t}}\,\dPSt
\end{flalign}
is defined as a further left Bousfield localization at {\'e}tale hypercovers,
see \cite[Section 1.3.2]{ToenVezzosi}.
\end{defi}

\begin{rem} 
The model category of derived affine schemes embeds into the one of derived prestacks 
via a fully faithful model-categorical Yoneda embedding
\begin{flalign}
\begin{gathered}
\xymatrix{
\ar@{-->}[dr] \dAff \ar[rr]^-{\text{Yoneda}}~&~ ~&~\dPSt\\
~&~ \dSt \ar[ru]~&~
}
\end{gathered}
\quad.
\end{flalign}
This embedding preserves weak equivalences as it is constructed from
derived mapping spaces in $\dAff$ and it factorizes through the canonical inclusion 
$\dSt\to \dPSt$ since the {\'e}tale topology is sub-canonical, see \cite[Sections 1.3.1 and 1.3.2]{ToenVezzosi}.
Hence, every derived affine scheme gives rise to a derived stack.
\end{rem}

For the purpose of our work, the most relevant kind of 
derived stacks are derived quotient stacks, in particular those
arising from an action of an affine group 
scheme on a derived affine scheme, see 
Example \ref{ex:quotient} below. 
The action of a (higher) group(oid)
on a derived stack $X_0\in\dSt$ can be encoded in terms of a specific type of 
simplicial object $X_\bullet : \Delta^\op\to \dSt$
in derived stacks (called Segal groupoid object in \cite{ToenVezzosi}),
which one can visualize as
\begin{flalign}
X_\bullet\,=\,\bigg(
\xymatrix{
X_0 \ar[r]~&~\ar@<1ex>[l]\ar@<-1ex>[l] X_1 \ar@<1ex>[r] \ar@<-1ex>[r]~&~\ar@<2ex>[l]\ar@<-2ex>[l]\ar[l] X_2~~\cdots
}\bigg)\quad.
\end{flalign}
The associated derived quotient stack is defined by taking the homotopy colimit 
\begin{flalign}
\vert X_\bullet\vert\,:=\,\hocolim_{\dSt}\big(X_\bullet :\Delta^\op\to \dSt\big)\,\in\,\dSt
\end{flalign}
of this diagram
in the model category of derived stacks from Definition \ref{def:dSt}.
Given any morphism $f_\bullet : X_\bullet\to Y_\bullet$ 
between two simplicial objects $X_\bullet,Y_\bullet : \Delta^\op\to \dSt$,
one obtains from the functoriality of homotopy colimits a morphism
\begin{flalign}\label{eqn:quotientmorphism}
\vert f_\bullet\vert\,:\,\vert X_\bullet\vert~\longrightarrow~\vert Y_\bullet\vert
\end{flalign}
in $\dSt$ between the corresponding derived quotient stacks.
If $f_\bullet$ is a level-wise weak equivalence, i.e.\ $f_n : X_n\to Y_n$
is a weak equivalence in $\dSt$ for all $n\geq 0$, then the induced
morphism \eqref{eqn:quotientmorphism} between the derived quotient stacks
is a weak equivalence in $\dSt$. This is a general consequence of the fact that 
homotopy colimits preserve weak equivalences.
We shall also need the following less direct preservation property for simplicial homotopy 
equivalences of the diagram.
\begin{lem}\label{lem:simplicialhomotopy}
Suppose that the morphism $f_\bullet : X_\bullet\to Y_\bullet$ is quasi-invertible, i.e.\ there 
exists a morphism $g_\bullet : Y_\bullet \to X_\bullet$
and two simplicial homotopies $g_\bullet\,f_\bullet\sim \id_{X_\bullet}$
and $f_\bullet\,g_\bullet\sim \id_{Y_\bullet}$. Then the induced morphism \eqref{eqn:quotientmorphism}
between the derived quotient stacks is a weak equivalence in $\dSt$.
\end{lem}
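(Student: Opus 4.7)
The plan is to descend the two simplicial homotopies to left homotopies in the model category $\dSt$ between the induced morphisms on the derived quotient stacks. Granting this, the relations $g_\bullet\, f_\bullet \sim \id_{X_\bullet}$ and $f_\bullet\, g_\bullet \sim \id_{Y_\bullet}$ translate to identities $\vert g_\bullet\vert\,\vert f_\bullet\vert = \id_{\vert X_\bullet\vert}$ and $\vert f_\bullet\vert\,\vert g_\bullet\vert = \id_{\vert Y_\bullet\vert}$ in the homotopy category $\mathrm{Ho}(\dSt)$, whence $\vert f_\bullet\vert$ becomes invertible in $\mathrm{Ho}(\dSt)$ and therefore is a weak equivalence in $\dSt$.

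To carry out the descent, I would first recall that a simplicial homotopy between two morphisms $f_\bullet, g_\bullet \colon X_\bullet \to Y_\bullet$ of simplicial objects in the cocomplete category $\dSt$ is equivalently the datum of a single morphism $\hat h \colon X_\bullet \otimes \Delta^1_\bullet \to Y_\bullet$ in $(\dSt)^{\Delta^\op}$, where $(X_\bullet \otimes \Delta^1_\bullet)_n := \coprod_{\sigma \in \Delta^1_n} X_n$ with structure maps induced from those of $X_\bullet$ and of the standard simplicial $1$-simplex $\Delta^1_\bullet$. The two vertex inclusions $d^0, d^1 \colon X_\bullet \to X_\bullet \otimes \Delta^1_\bullet$ then satisfy $\hat h \circ d^0 = f_\bullet$ and $\hat h \circ d^1 = g_\bullet$. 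The crucial ingredient is the Fubini-type weak equivalence
\begin{equation*}
\bigl\vert X_\bullet \otimes \Delta^1_\bullet \bigr\vert \;\simeq\; \vert X_\bullet \vert \otimes \Delta^1
\end{equation*}
in $\dSt$, where the tensoring on the right is the simplicial tensoring of $\dSt$ inherited (through the left Bousfield localizations of Definition \ref{def:dSt}) from the projective simplicial model structure on $\mathbf{sPSh}(\dAff)$. Since the map $\Delta^1 \to \Delta^0$ is a weak equivalence of simplicial sets, the induced collapse $\vert X_\bullet\vert \otimes \Delta^1 \to \vert X_\bullet\vert$ is a weak equivalence in $\dSt$; the two sections $\vert d^0\vert, \vert d^1\vert$ of this collapse therefore agree in $\mathrm{Ho}(\dSt)$, and consequently $\vert f_\bullet\vert = \vert \hat h\vert \circ \vert d^0\vert$ and $\vert g_\bullet\vert = \vert \hat h\vert \circ \vert d^1\vert$ also agree in $\mathrm{Ho}(\dSt)$, completing the descent.

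The main technical subtlety lies in the Fubini-type weak equivalence displayed above, which demands enough cofibrancy of the underlying bisimplicial diagram for pointwise weak equivalences to descend to homotopy colimits over $\Delta^\op$. This is dealt with by Reedy cofibrantly replacing $X_\bullet$ (which does not alter the weak equivalence type of $\vert X_\bullet\vert$) and invoking the standard commutation of geometric realization with the simplicial tensoring in any simplicial model category, see e.g.\ \cite{ToenVezzosi} for the relevant general theory in the setting of $\dSt$.
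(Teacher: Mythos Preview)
Your argument is correct and follows a genuinely different route from the paper. The paper's proof unwinds the definition of $\dSt$ through its left Bousfield localizations: since left Quillen functors preserve homotopy colimits, one reduces to $\mathbf{sPSh}(\dAff)$, where weak equivalences are detected object-wise; for each $A\in\dgCAlg^{\leq 0}$ the induced map of simplicial simplicial sets is then handled via the diagonal, and the classical fact that $\mathrm{diag}$ sends level-wise weak equivalences of bisimplicial sets to weak equivalences in $\sSet$ (Goerss--Jardine) finishes the job, because the given simplicial homotopy in the external direction makes $f_{\bullet,m}(A)$ a simplicial homotopy equivalence for each fixed internal index $m$. Your approach instead stays inside $\dSt$ as a simplicial model category and never evaluates at points: you show directly that a simplicial homotopy of diagrams descends to a left homotopy between the induced maps on homotopy colimits. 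This is more conceptual and applies verbatim in any simplicial model category, whereas the paper's reduction is more concrete and lands on a single citable lemma about bisimplicial sets.

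One terminological quibble: the object $X_\bullet \otimes \Delta^1_\bullet$ with $n$-th level $\coprod_{\sigma\in\Delta^1_n} X_n$ is \emph{not} the simplicial tensoring of $X_\bullet$ by $\Delta^1$ in the Reedy (or projective) simplicial model structure on $(\dSt)^{\Delta^\op}$; that tensoring is constant in $n$, namely $n\mapsto X_n\otimes\Delta^1$. So the phrase ``commutation of geometric realization with the simplicial tensoring'' does not literally justify your displayed equivalence. What you actually need is the Fubini/diagonal identification
\[
\hocolim_{n\in\Delta^\op}\Big(\coprod_{\Delta^1_n} X_n\Big)\;\simeq\;\hocolim_{(n,m)\in\Delta^\op\times\Delta^\op}\Big(\coprod_{\Delta^1_m} X_n\Big)\;\simeq\;\big(\hocolim_{n} X_n\big)\otimes \Delta^1\quad,
\]
which is again the diagonal lemma for bisimplicial objects together with the fact that $(-)\otimes\Delta^1$ is left Quillen. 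In this sense the two proofs converge on the same underlying mechanism, just organized differently.
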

\begin{proof}
From the Definition \ref{def:dSt} of the model structure on
$\dSt$ in terms of left Bousfield localizations,
we have two left Quillen functors
$\id : \mathbf{sPSh}(\dAff)\to \dPSt$ and $\id: \dPSt\to \dSt$.
Since left Quillen functors preserve homotopy colimits, 
our claim would follow if we can show that the morphism
\begin{flalign}
\hocolim_{\mathbf{sPSh}(\dAff)}\big(X_\bullet) ~\longrightarrow~
\hocolim_{\mathbf{sPSh}(\dAff)}\big(Y_\bullet)
\end{flalign}
between the homotopy colimits with respect to the projective model structure 
is a weak equivalence in $\mathbf{sPSh}(\dAff)$. Since projective weak equivalences
are defined object-wise, the latter amounts to showing that
\begin{flalign}\label{eqn:sSethocolim}
\hocolim_{\sSet}\big(X_\bullet(A)\big)~\longrightarrow~\hocolim_{\sSet}\big(Y_\bullet(A)\big)
\end{flalign}
is a weak equivalence of simplicial sets, for all $A\in\dgCAlg^{\leq 0}$.
The morphism $(f_\bullet)_A : X_\bullet(A) \to Y_\bullet(A)$
between the functors $X_\bullet(A) ,Y_\bullet(A)  : \Delta^\op\to \sSet$ 
can be identified with a morphism 
$(f_{\bullet,\bullet})_A : X_{\bullet,\bullet}(A) \to Y_{\bullet,\bullet}(A)$
between the associated bisimplicial sets $X_{\bullet,\bullet}(A) ,Y_{\bullet,\bullet}(A) : 
\Delta^\op\times\Delta^\op\to \Set$.
The homotopy colimits in \eqref{eqn:sSethocolim} can then 
be computed explicitly by taking diagonals of these bisimplicial sets,
i.e.\ \eqref{eqn:sSethocolim} reduces to the morphism
\begin{flalign}
\mathrm{diag}\big((f_{\bullet,\bullet})_A\big)\,:\,\mathrm{diag}\big(X_{\bullet,\bullet}(A)\big)~\longrightarrow~\mathrm{diag}\big(Y_{\bullet,\bullet}(A)\big)\quad.
\end{flalign}
Our claim then follows from the fact that $\mathrm{diag}$ sends level-wise
weak equivalences of bisimplicial sets, and hence in particular level-wise
simplicial homotopy equivalences, to weak equivalences in $\sSet$, 
see e.g.\ \cite[Chapter IV, Proposition 1.7]{Goerss}.
\end{proof}

\begin{ex}\label{ex:quotient}
The following class of examples will be crucial for our work.
Let $X = \Spec(A)\in\dAff$ be a derived affine scheme
with an action $r : X\times G\to X\,,~(x,g)\mapsto x\,g$ of 
an affine group scheme $G = \Spec(H)\in \mathbf{Grp}(\Aff)$.
One can assemble these data into a simplicial object 
\begin{subequations}\label{eqn:X/Gsimplicial}
\begin{flalign}
N_\bullet(X/G)\,:=\,\bigg(
\xymatrix{
X \ar[r]~&~\ar@<1ex>[l]\ar@<-1ex>[l] X\times G \ar@<1ex>[r] \ar@<-1ex>[r]~&~\ar@<2ex>[l]\ar@<-2ex>[l]\ar[l] X\times G^2~~\cdots
}\bigg)
\end{flalign}
in $\dAff$, and hence via the model-categorical Yoneda embedding in $\dSt$, 
by using the face maps
\begin{flalign}
d_i\,:\, X\times G^n~&\longrightarrow~X\times G^{n-1}\quad,\\
\nn (x,g_1,\dots,g_n)~&\longmapsto~
\begin{cases}
(x\, g_1,g_2,\dots,g_n)~&~~\text{for }i\,=\,0~~,\\
(x,g_1,\dots,g_i\,g_{i+1},\dots,g_n)~&~~\text{for }i\,=\,1,\dots,n-1~~,\\
(x,g_1,\dots,g_{n-1}) ~&~~\text{for }i\,=\,n~~,
\end{cases}
\end{flalign}
and the degeneracy maps
\begin{flalign}
s_i\,:\, X\times G^{n}~\longrightarrow~X\times G^{n+1}~~,
\quad (x,g_1,\dots,g_n)~\longmapsto~(x,g_1,\dots,g_i,e,g_{i+1},\dots, g_n)\quad,
\end{flalign}
\end{subequations}
for all $i=0,\dots,n$, where $e\in G$ denotes the identity element.
We denote by 
\begin{flalign}\label{eqn:X/Ghocolim}
[X/G]\,:=\, \big\vert N_\bullet(X/G)\big\vert \,=\, \hocolim_{\dSt}\big( N_\bullet(X/G): \Delta^\op\to\dSt\big)\,\in\,\dSt
\end{flalign}
the associated derived quotient stack.
\end{ex}

As a final prerequisite for Section \ref{sec:PFA}, we have to recall briefly the concept
of quasi-coherent complexes on derived stacks. Here one encounters potential set 
theoretic size issues, to be addressed in Remark \ref{rem:size} below, 
which for an ease of presentation will be suppressed in the main text.
We denote by $\dgCat$ the $2$-category
of dg-categories, dg-functors and dg-natural transformations over $\bbK$. 
Let us start with the more concrete case of quasi-coherent complexes on derived affine schemes.
For any $\Spec(A)\in \dAff$, we denote by 
\begin{flalign}
\QCoh\big(\Spec(A)\big)\,:=\, {}_A\dgMod_{\mathrm{cof}}\,\in\,\dgCat
\end{flalign} 
the dg-category whose objects
are all (not necessarily bounded) cofibrant $A$-dg-modules $M$
and whose hom-complexes $\hom_A(M,N)$ consist in degree $k\in\bbZ$
of all $A$-linear maps $K :M\to N$ of degree $k$, with differential defined
as usual by $\partial(K) := \dd_N\,K - (-1)^k\,K\,\dd_M $.
For any morphism $f : \Spec(A)\to \Spec(B)$ in $\dAff$, 
we denote by 
\begin{flalign}
\QCoh(f)\,:=\, A\otimes_B(-)\,:\, \QCoh\big(\Spec(B)\big)~\longrightarrow~\QCoh\big(\Spec(A)\big)
\end{flalign} 
the change-of-base dg-functor associated with 
the opposite $\dgCAlg^{\leq 0}$-morphism $f^\ast : B\to A$.
This defines a pseudo-functor
\begin{flalign}\label{eqn:QCohdAff}
\QCoh\,:\, \dAff^\op~\longrightarrow~\dgCat
\end{flalign}
which sends weak equivalences in $\dAff$ to weak equivalences 
in the model structure on dg-categories from \cite{Tabuada}.
(The reason for this is that, for every weak equivalence $f^\ast : B\to A$ in $\dgCAlg^{\leq 0}$ 
and every cofibrant $B$-dg-module $M$, the map $f^\ast\otimes_B \id_M : M \cong B\otimes_B M\to
A\otimes_B M$ is a quasi-isomorphism of $B$-dg-modules.) Quasi-coherent complexes
on derived stacks are then defined by performing a homotopy right Kan extension
of \eqref{eqn:QCohdAff} along the model-categorical Yoneda
embedding $\dAff^\op\to\dSt^\op$. This yields a pseudo-functor (denoted
with abuse of notation by the same symbol)
\begin{flalign}\label{eqn:QCohdSt}
\QCoh\, :\,\dSt^\op~\longrightarrow~\dgCat
\end{flalign}
which sends weak equivalences in $\dSt$ to weak equivalences in $\dgCat$.
The value of this pseudo-functor on a derived quotient stack $[X/G]= \big[\Spec(A)/\Spec(H)\big]\in\dSt$
as in Example \ref{ex:quotient} can be characterized
rather explicitly in terms of $A_\infty$-comodules \cite{holim}. 
In the special case where the affine group scheme $G=\Spec(H)$ is reductive,
one obtains a weakly equivalent but simpler model
\begin{flalign}\label{eqn:QCoh[X/G]}
\QCoh\big([X/G]\big)\,\simeq\, {}_A\dgMod^H_{\mathrm{cof}}\,\in\,\dgCat
\end{flalign}
in terms of cofibrant $A$-dg-modules with a compatible
$H$-coaction, see e.g.\ \cite[Proposition 2.17]{BPSquantization}
for further details on this point. More explicitly,
an object in $\QCoh\big([X/G]\big)$ is a pair $(M,\rho_M)$
consisting of a cofibrant $A$-dg-module $M$
and a coaction $\rho_M: M\to M\otimes H$ which satisfies
$\rho_M(a\,m) = \rho(a)\,\rho_M(m)$, for all $a\in A$ and $m\in M$,
where $\rho:A\to A\otimes H$ denotes the given coaction on $A$.
The hom-complexes $\hom\big((M,\rho_M),(N,\rho_N)\big):= \hom_A(M,N)^H
\subseteq \hom_A(M,N)$ are given by the subcomplexes consisting of 
$A$-linear maps $K:M\to N$ which are strictly $H$-equivariant, i.e.\ 
$(K\otimes\id_H)\,\rho_M = \rho_N\,K$.
\begin{rem}\label{rem:size}
The construction of the pseudo-functors \eqref{eqn:QCohdAff} and 
\eqref{eqn:QCohdSt} is slightly subtle because of size issues. 
Observe that, already in the derived affine case 
\eqref{eqn:QCohdAff}, the dg-categories $\QCoh\big(\Spec(A)\big) = 
{}_A\dgMod_{\mathrm{cof}}$ are not small, hence this assignment
does not take values in Tabuada's model category $\dgCat$ of \textit{small} dg-categories \cite{Tabuada}.
In the general case \eqref{eqn:QCohdSt}, by definition of
$\QCoh(X) = \holim_{\Spec(A)\to X}^{} \QCoh\big(\Spec(A)\big)$
as a homotopy right Kan extension, one encounters homotopy limits 
which are indexed by a slice category $\dAff^\op/X$ that is in general not small, so their existence
is not guaranteed. In the literature on derived algebraic geometry, one can find
the following two solutions to these problems: 1.)~The authors of \cite{GRbook}
work in a more abstract, but powerful setting where ordinary dg-categories are 
replaced by presentable $\bbK$-linear stable $\infty$-categories $\mathbf{DGCat_{\mathrm{cont}}}$,
which via Lurie's general machinery \cite{Lurie} resolves the size issues mentioned above.
2.)~The authors of \cite{ToenVezzosi} provide a more traditional solution
given by using Grothendieck universes in order to introduce bounds on the allowed sizes. 
Relative to a choice of universes $\mathbb{U}\in \mathbb{V}\in\mathbb{W}\cdots$, 
they model derived stacks by functors $\dAff^\op_{\mathbb{U}}\to \sSet_{\mathbb{V}}^{}$.
In this refined context, one obtains a precise definition of \eqref{eqn:QCohdAff} in terms of 
the pseudo-functor $\QCoh: \dAff^\op_{\mathbb{U}}\to \dgCat_{\mathbb{V}}$ taking values in 
$\mathbb{V}$-small dg-categories, which according to \cite{ToenDG} carry
a model structure analogous to the one of \cite{Tabuada}. Since now
the slice categories $\dAff^\op_{\mathbb{U}}/X$ are by construction $\mathbb{V}$-small,
the homotopy right Kan extension \eqref{eqn:QCohdSt} exists. In the present paper,
we prefer to use the second option to resolve size issues as it allows us to remain
in the context of model categories. To avoid cluttering notations, we will continue to 
drop in the main text all labels referring to Grothendieck universes.
\end{rem}

\subsection{\label{subsec:dCrit}Derived critical loci}
The concept of a \textit{derived critical locus} is a derived 
geometric refinement of the set of critical points of a function.
In the context of mathematical physics, this function is usually 
the action function of a physical system,
so that the derived critical locus describes a derived geometric model
for the moduli space of solutions to the associated Euler-Lagrange equations.
\sk

Given a $\dSt$-morphism $S : Y\to \mathbb{A}^1$ 
from a derived Artin stack $Y$ to the affine line $\mathbb{A}^1 := \Spec\big(\bbK[x]\big)$,
the derived critical locus is defined as the homotopy pullback
\begin{flalign}
\begin{gathered}
\xymatrix{
\dCrit(S) \ar@{-->}[r]\ar@{-->}[d]~&~Y \ar[d]^-{\dd_{\dR}S}\\
Y \ar[r]_-{0}~&~T^\ast Y
}
\end{gathered}
\end{flalign}
in the model category $\dSt$, where $0$ denotes the zero-section of the cotangent bundle $T^\ast Y$
and $\dd_{\dR}S$ denotes the section obtained by applying the de Rham differential to $S$.
Since all homotopy pullbacks exist in $\dSt$, the derived critical locus $\dCrit(S)\in \dSt$
always exists for any $S$, however its explicit description is in general difficult.
Concrete models for derived critical loci have been developed for the following special 
cases: 1.)~$Y=\Spec(A)$ is an ordinary affine scheme \cite{Vezzosi},
2.)~$Y=[\Spec(A)/\g]$ is a formal quotient stack \cite{CG2}, and 3.)~$Y=[\Spec(A)/\Spec(H)]$
is the quotient stack associated with the action of an affine group scheme on an ordinary (smooth) 
affine scheme \cite{BSSdCrit,AnelCalaque}. See also \cite{Albin} for a generalization to
Lie algebroids and groupoids.
\sk

Since it will be needed in the main text, we
briefly recall the explicit model from \cite{BSSdCrit} 
for the derived critical locus of a function 
\begin{flalign}
S \,:\, [X/G]\,=\,[\Spec(A)/\Spec(H)]~\longrightarrow~ \mathbb{A}^1
\end{flalign} 
on the quotient stack associated with the action $r: X\times G\to X$ of 
an affine group scheme $G=\Spec(H)$ on an ordinary (smooth) affine scheme $X=\Spec(A)$.
In this case the derived critical locus is a derived quotient stack
\begin{flalign}\label{eqn:dCritquotient}
\dCrit(S)\,\simeq\,[Z/G]\,\in\,\dSt
\end{flalign}
of a derived affine scheme $Z = \Spec\big(\O(Z)\big)\in\dAff$ by an action of $G$. The commutative
dg-algebra $\O(Z)\in\dgCAlg^{\leq 0}$ specifying $Z$ is given by the graded commutative algebra
\begin{subequations}\label{eqn:OZ}
\begin{flalign}\label{eqn:OZ1}
\O(Z)\,=\,\Sym_A\Big( \big(A\otimes\g[2]\big)\oplus \mathrm{T}_{\! A}[1]\Big)
\end{flalign}
which is generated over $A$ 
by the free $A$-module $A\otimes\g[2]$, where $\g$ denotes the
Lie algebra of $G$, and the $[1]$-shift of the $A$-module $\mathrm{T}_{\! A}$ 
of derivations of $A$. The differential of $\O(Z)$ is defined on 
the generators by
\begin{flalign}\label{eqn:OZ2}
\dd a\,=\, 0\quad,\qquad
\dd v\,=\, \iota_v \dd_{\dR} S \quad ,\qquad
\dd\xi \,=\,-\iota_{\rho(\xi)}\lambda \quad,
\end{flalign}
\end{subequations}
for all $a\in A$, $v\in \mathrm{T}_{\! A}[1]$ and $\xi\in\g[2]$.
The second expression denotes the contraction between the derivation
$v\in \mathrm{T}_{\! A} $ and the $1$-form $\dd_{\dR} S\in\Omega^1_A$.
In the third expression, 
$\lambda\in \Omega^1_{\Sym_A\mathrm{T}_{\! A}}$ denotes the tautological $1$-form on $T^\ast X$
and $\rho : \g\to \mathrm{T}_{\Sym_A\mathrm{T}_{\! A}}$
denotes the Lie algebra action which is induced from the $G$-action
on the cotangent bundle $T^\ast X = \Spec(\Sym_A\mathrm{T}_{\! A})$.
The $G$-action $r:Z\times G\to Z$ entering \eqref{eqn:dCritquotient}
is induced from the given $G$-actions
on $X$ and $T^\ast X$ and the adjoint action on the Lie algebra $\g$.

\subsection{\label{subsec:latticeYM}Lattice gauge theory}
We recall some basic aspects of lattice gauge theory on 
the $2$-dimensional square lattice $\bbZ^2$.
We denote points by $x = (x_1,x_2)\in\bbZ^2$ and interpret $x_1,x_2\in\bbZ$ as discrete coordinates.
\sk

To describe a gauge theory on $\bbZ^2$, one has to choose a structure group, which
for simplicity we shall always take to be the general linear group 
$\GL_n$ of some finite degree $n\in\bbN$ over the field $\bbK$.
A gauge field (or connection) on $\bbZ^2$ is given by an assignment of structure group elements
$T_i(x)\in\GL_n$ to the edges of $\bbZ^2$, i.e.\
\begin{flalign}\label{eqn:latticepicture}
\begin{gathered}
\begin{tikzpicture}[scale=1.1,decoration={
    markings,
    mark=at position 0.5 with {\arrow{>}}}
    ]
\draw[color=blue,postaction={decorate},thick] (0,-0.25) -- (0,2.25)  node [midway, left] {{\footnotesize $T_2(x_1,x_2)$}}  ;
\draw[color=blue,postaction={decorate},thick] (2,-0.25)  -- (2,2.25) node [midway, right] {{\footnotesize $T_2(x_1+1,x_2)$}} ;
\draw[color=blue,postaction={decorate},thick] (-0.25,0)  -- (2.25,0) node [midway, below] {{\footnotesize $T_1(x_1,x_2)$}} ;
\draw[color=blue,postaction={decorate},thick] (-0.25,2)  -- (2.25,2) node [midway, above] {{\footnotesize $T_1(x_1,x_2+1)$}} ;
\filldraw (0,0) circle (1.5pt);
\filldraw (0,2) circle (1.5pt);
\filldraw (2,0) circle (1.5pt);
\filldraw (2,2) circle (1.5pt);
\node at (0,-0.5) {{\footnotesize $x_1$}};
\node at (2,-0.5) {{\footnotesize $x_1+1$}};
\node at (-0.75,0) {{\footnotesize $x_2$}};
\node at (-0.75,2) {{\footnotesize $x_2+1$}};
\end{tikzpicture}
\end{gathered}\qquad.
\end{flalign}
One interprets these group elements as parallel transports along the edges.
The space of connections on $\bbZ^2$ is thus given by the product
\begin{flalign}\label{eqn:Con}
\Con(\bbZ^2)\,:=\,\prod_{(x,i)\in\bbZ^2\times\{1,2\}} \GL_n\quad,
\end{flalign}
where as visualized in \eqref{eqn:latticepicture} we use the index 
$i=1$ for the $x_1$-components $T_1(x)$ of the connection 
and $i=2$ for the $x_2$-components $T_2(x)$. A gauge transformation
in this discrete context is given by an assignment of structure group elements
$U(x)\in\GL_n$ to the vertices of $\bbZ^2$, i.e.\ the black dots in \eqref{eqn:latticepicture}.
The gauge group on $\bbZ^2$ is thus given by the product group
\begin{flalign}\label{eqn:Gau}
\G(\bbZ^2)\,:=\,\prod_{x \in\bbZ^2} \GL_n\quad.
\end{flalign}
The action of gauge transformations on connections is given by
\begin{flalign}\label{eqn:Gauaction}
r\,:\,\Con(\bbZ^2)\times \G(\bbZ^2)~&\longrightarrow~\Con(\bbZ^2)\quad,\\
\nn \Big(\big(T_i(x)\big)_{(x,i)}, \big(U(x)\big)_{x}\Big)~&\longmapsto
\Big( U(x+e_i)^{-1}\,T_{i}(x)\,U(x)\Big)_{(x,i)}\quad,
\end{flalign}
where $e_i\in\bbZ^2$ is defined by $e_1=(1,0)$ and $e_2=(0,1)$.
This means that $T_{i}(x)$ transforms by right multiplication
with the group element $U(x)$ located at the source of the edge $(x,i)\in\bbZ^2\times\{1,2\}$
and by left multiplication with the inverse of the group element 
$U(x+e_i)$ located at the target.
\sk

It remains to specify a gauge invariant action function to encode the dynamics
of our lattice gauge theory. For this we shall take the Wilson action \cite{Wilson},
which is a discrete approximation of the Yang-Mills action from continuum gauge theory.
The basic idea is to consider the, say counter-clockwise,
parallel transports along the $2$-dimensional faces in \eqref{eqn:latticepicture},
which we denote by
\begin{flalign}\label{eqn:Efield}
E(x)\,:=\, T_2(x)^{-1}\,T_1(x + e_2 )^{-1}\,T_2(x + e_1)\, T_1(x)\,\in\,\GL_n\quad,
\end{flalign}
for all $x\in\bbZ^2$. The Wilson action is encoded by the family of
gauge invariant functions
\begin{flalign}\label{eqn:Wilsonaction}
S_\Lambda\,:\,\Con(\bbZ^2)~\longrightarrow~\mathbb{A}^1~~,\quad
\big(T_{i}(x)\big)_{(x,i)}~\longmapsto~\sum_{x\in\Lambda\subset \bbZ^2}
\mathrm{Tr}\big(E(x)\big)\quad,
\end{flalign}
which is labeled by all finite subsets $\Lambda\subset \bbZ^2$ whose role is to make
the summation over $x$ well defined. The need for such regulators $\Lambda$
for the action is typical for field theories on non-compact spaces.
The standard method to derive Euler-Lagrange equations from the family of actions
$\{S_\Lambda\}$ is as follows: Given any compactly supported
variation $\delta_\alpha$, one chooses a sufficiently large $\Lambda\subset \bbZ^2$
such that the support $\supp(\alpha) \ll \Lambda$ is safely contained (to avoid cutoff effects)
and then one computes as usual the Euler-Lagrange equations from $\delta_\alpha S_\Lambda =0$.
Applying this procedure to \eqref{eqn:Wilsonaction} yields the Euler-Lagrange equations
\begin{subequations}\label{eqn:EL}
\begin{flalign}
\label{eqn:EL1}E(x)\,=\, T_{1}(x-e_1)\,E(x-e_1)
\,T_1(x-e_1)^{-1}\quad,\\
\label{eqn:EL2}E(x)\,=\, T_{2}(x-e_2)\,E(x-e_2)
\,T_2(x-e_2)^{-1}\quad,
\end{flalign}
\end{subequations}
for all $x\in\bbZ^2$.

%%%%%%%%%%%%%%%%%%%%%%%%%%%%%%%%%%%%%%%%%%%%%%%%
%%%%%%%%%%%%%%%%%%%%%%%%%%%%%%%%%%%%%%%%%%%%%%%%

\section{\label{sec:dCrit}Global derived critical locus of $2$d lattice Yang-Mills theory}
In this section we describe explicitly the derived critical locus from Subsection
\ref{subsec:dCrit} for the lattice Yang-Mills model from Subsection \ref{subsec:latticeYM}.
There are some minor subtleties in working out this description, arising from the fact
that our discrete spacetime $\bbZ^2$ is non-compact,
which however can be controlled via standard methods from field theory, such as the regularized
actions from Subsection \ref{subsec:latticeYM}.

\subsection{\label{subsec:GLn}Some computational aspects of $\GL_n$ and $\gl_n$}
Before we start, let us recall some basic aspects of
the affine group scheme $\GL_n$ which are essential for our discussion below.
The associated commutative algebra $\O(\GL_n)\in\CAlg$ of $\GL_n = \Spec\big(\O(\GL_n)\big)$
is given by the localization
\begin{flalign}
\O(\GL_n)\,:=\, \bbK[\{T_{ab}\}][(\det T)^{-1}]\,:=\,
\bbK[\{T_{ab}\},\widetilde{T}]\big/\big((\det T)\,\widetilde{T}-1\big)
\end{flalign}
of the polynomial algebra with $n^2$ generators $T_{ab}$, for $a,b=1,\dots,n$, at the determinant
of the $n\times n$-matrix $T = (T_{ab})$ which is formed by the generators.
The group structure of $\GL_n$ is encoded by the following Hopf algebra structure
on $\O(\GL_n)$
\begin{subequations}
\begin{flalign}\label{eqn:Hopfcomponents}
\Delta(T_{ab}) \,=\, \sum_{c=1}^n T_{ac}\otimes T_{cb}\quad , \qquad
\epsilon(T_{ab})\,=\,\delta_{ab}\quad,\qquad
S(T_{ab}) \,=\,T^{-1}_{ab}\quad, 
\end{flalign}
where $\delta_{ab}$ denotes the Kronecker delta and $T^{-1}_{ab}$ are the entries of the inverse $T^{-1}$ of the 
matrix $T$, which exists since we have localized at the determinant
$\det T$. This Hopf algebra structure can be written more conveniently in matrix notation as
\begin{flalign}
\Delta(T)\,=\,T\otimes T\quad,\qquad \epsilon(T)\,=\,\oone\quad,\qquad
S(T)\,=\,T^{-1}\quad,
\end{flalign}
\end{subequations}
where $T\otimes T$ denotes the combination of matrix multiplication 
and tensor product from \eqref{eqn:Hopfcomponents} and $\oone$ denotes the identity matrix.
\sk

The Lie algebra of $\GL_n$ can be defined in terms
of derivations $\gl_n :=\mathrm{Der}_\epsilon\big(\O(\GL_n),\bbK\big)$
relative to the counit $\epsilon : \O(\GL_n)\to\bbK$, i.e.\
linear maps $\xi : \O(\GL_n)\to\bbK$ which satisfy $\xi(h\,k)= \xi(h)\,\epsilon(k) + \epsilon(h)\,\xi(k)$,
for all $h,k\in\O(\GL_n)$. A basis for $\gl_n$ is given by the derivations $\xi_{ab}\in \gl_n$,
for $a,b=1,\dots,n$, which are defined on the generators $T_{cd}$ of $\O(\GL_n)$ by
\begin{flalign}
\xi_{ab}(T_{cd})\,=\, \delta_{ad}\,\delta_{bc}\quad.
\end{flalign}
One can think of these derivations in terms of partial derivatives 
$\xi_{ab} = \frac{\partial~~}{\partial T_{ba}}$. It will often be
convenient to assemble these basis derivations into an $n\times n$-matrix $\xi = (\xi_{ab})$.
\sk

The right adjoint action $\Ad:\GL_n\times\GL_n\to \GL_n\,,~(g^\prime,g)\mapsto g^{-1}\,g^\prime\,g$
is encoded algebraically by the right adjoint coaction which in matrix notation is given by
\begin{subequations}
\begin{flalign}
\rho \,:\, \O(\GL_n)~\longrightarrow~ \O(\GL_n)\otimes \O(\GL_n)~~,\quad
T~\longmapsto~U^{-1}\,T\,U\quad,
\end{flalign}
where we have identified
\begin{flalign}
\O(\GL_n)\otimes \O(\GL_n)\,\cong\, \bbK [\{T_{ab}\},\{U_{ab}\}][(\det T)^{-1}, (\det U)^{-1}]\quad.
\end{flalign}
\end{subequations}
The right adjoint coaction on the Lie algebra $\gl_n$ is given in matrix notation by
\begin{flalign}
\rho \,:\, \gl_n~\longrightarrow~ \gl_n \otimes \O(\GL_n)~~,\quad
\xi~\longmapsto~ U^{-1}\,\xi\,U\quad.
\end{flalign}
Furthermore, the Lie algebra $\gl_n$ acts via left invariant derivations on $\O(\GL_n)$,
i.e.\ there is a Lie algebra representation
\begin{subequations}
\begin{flalign}
\rho^L\,:\, \gl_n~\longrightarrow~\mathrm{T}_{\!\O(\GL_n)}
\end{flalign}
which on generators reads as
\begin{flalign}
\rho^L(\xi_{ab})(T_{cd})\,:=\,\big(\id_{\O(\GL_n)}\otimes \xi_{ab}\big)\Delta(T_{cd})\,=\, T_{cb}\,\delta_{ad}\quad.
\end{flalign}
\end{subequations}
The linear map $\rho^L$ provides a trivialization
\begin{flalign}\label{eqn:TastGtrivialization}
\O(\GL_n)\otimes\gl_n ~\stackrel{\cong}{\longrightarrow}~\mathrm{T}_{\!\O(\GL_n)}
\end{flalign}
of the $\O(\GL_n)$-module of derivations $\mathrm{T}_{\!\O(\GL_n)}$.

\subsection{\label{subsec:dCritYM}Description of $\mathrm{dCrit}(S)$}
We now describe the derived critical locus \eqref{eqn:dCritquotient}
for the example given by the lattice Yang-Mills model from Subsection \ref{subsec:latticeYM}.
In this example, the affine scheme $X=\Spec(A)$ is given by
the space of connections \eqref{eqn:Con}, i.e.\ we have that
\begin{flalign}\label{eqn:OCon}
A \,=\,\O\big(\Con(\bbZ^2)\big) \,=\, \bigotimes_{(x,i)\in\bbZ^2\times \{1,2\}}\O(\GL_n)
\end{flalign}
is an (infinite) coproduct in $\CAlg$. We denote by $T_i(x)\in \O\big(\Con(\bbZ^2)\big)$ the generators
of this commutative algebra which are given by $T\in \O(\GL_n)$ on the tensor
factor $(x,i)$ and $1\in\O(\GL_n)$ on all other tensor factors.
The affine group scheme $G = \Spec(H)$ is given by the gauge group \eqref{eqn:Gau}, i.e.\
we have that
\begin{flalign}\label{eqn:OGau}
H\,=\,\O\big(\G(\bbZ^2)\big)\,=\, \bigotimes_{x\in\bbZ^2}\O(\GL_n)
\end{flalign}
is an (infinite) coproduct of commutative Hopf algebras. We denote by $U(x)\in \O\big(\G(\bbZ^2)\big)$
the generators of this commutative Hopf algebra which are given by $U\in \O(\GL_n)\cong
\bbK[\{U_{ab}\}][(\det U)^{-1}]$ 
on the tensor factor $x$ and $1\in\O(\GL_n)$ on all other tensor factors.
The action \eqref{eqn:Gauaction} of gauge transformations on connections
is given algebraically by the coaction
\begin{flalign}\label{eqn:gaugeaction}
\rho \,:\, \O\big(\Con(\bbZ^2)\big)~&\longrightarrow~\O\big(\Con(\bbZ^2)\big)\otimes \O\big(\G(\bbZ^2)\big)\quad,\\
\nn T_{i}(x)~&\longmapsto~ U(x+e_i)^{-1} \,T_{i}(x)\,U(x)\quad.
\end{flalign}

To determine the derived affine scheme $Z(\bbZ^2)=\Spec\big(\O\big(Z(\bbZ^2)\big)\big)\in\dAff$ 
from \eqref{eqn:OZ} which enters
the derived critical locus $\dCrit(S)\simeq \big[Z(\bbZ^2)/\G(\bbZ^2)\big]$, we have to describe the Lie algebra
$\g(\bbZ^2)$ of the gauge group $\G(\bbZ^2)$ and the module of derivations
$\mathrm{T}_{\! \O(\Con(\bbZ^2))}$ on the space of connections $\Con(\bbZ^2)$.
Since $\O\big(\G(\bbZ^2)\big)$ and $\O\big(\Con(\bbZ^2)\big)$ are infinitely generated
as a consequence of the non-compactness of the discrete spacetime, 
there exist different concepts of derivations which are distinguished by
their support properties on $\bbZ^2$. Observing that
\eqref{eqn:OCon} and \eqref{eqn:OGau} describe functions
which are compactly supported on $\bbZ^2$ due to the coproducts,
we will model $\g(\bbZ^2)$ and $\mathrm{T}_{\! \O(\Con(\bbZ^2))}$ by 
derivations which are compactly supported on $\bbZ^2$ too.
This yields
\begin{flalign}
\g(\bbZ^2)\,=\, \bigoplus_{x\in\bbZ^2} \gl_n
\end{flalign}
and, recalling also the trivialization \eqref{eqn:TastGtrivialization},
\begin{flalign}\label{eqn:cotangentCon}
\mathrm{T}_{\! \O(\Con(\bbZ^2))}\,= \,\O\big(\Con(\bbZ^2)\big)\otimes\bigoplus_{(x,i)\in\bbZ^2\times\{1,2\}}\gl_n\quad.
\end{flalign}
We denote by $\xi(x)\in \g(\bbZ^2)$
the element which is given by $\xi\in \gl_n$ 
on the summand $x$ and $0\in\gl_n$ on all other summands.
Similarly, we denote by $\xi_{i}(x)\in \mathrm{T}_{\! \O(\Con(\bbZ^2))}$ 
the element which is given by $\xi\in \gl_n$ 
on the summand $(x,i)$ and $0\in\gl_n$ on all other summands.
\sk

Combining the above building blocks, we obtain that the graded commutative algebra
\eqref{eqn:OZ1} reads in our example as
\begin{subequations}\label{eqn:OZYM}
\begin{flalign}\label{eqn:OZYM1}
\O\big(Z(\bbZ^2)\big)\,\cong\,\bigotimes_{x\in\bbZ^2}\Sym \big(\gl_n[2]\big)\otimes \!\! \bigotimes_{(x,i)\in\bbZ^2\times\{1,2\}} \!\! \Sym\big(\gl_n[1]\big)\otimes \!\! \bigotimes_{(x,i)\in\bbZ^2\times\{1,2\}} \!\! \O(\GL_n)\quad.
\end{flalign}
To determine the differential \eqref{eqn:OZ2} for our example, 
let us note that both the action $S_\Lambda$ \eqref{eqn:Wilsonaction} and 
the tautological $1$-form $\lambda_{\Lambda}$ on $T^\ast\Con(\bbZ^2)$
require a regulator $\Lambda\subset \bbZ^2$ to be well defined.
This regulator can however be easily removed $\Lambda\to \bbZ^2$ in the 
differential because  both contractions in \eqref{eqn:OZ2} 
are against derivations which are compactly supported on $\bbZ^2$.
One then finds the following explicit expressions
for the differential on the generators of $\O\big(Z(\bbZ^2)\big)$
\begin{flalign}\label{eqn:OZYM2}
\dd T_i(x)\,&=\,0\quad,\\[3pt]
\dd \xi_1(x)\,&=\, E(x)- T_{2}(x-e_2)\,E(x-e_2)\,T_2(x-e_2)^{-1}\quad,\\[3pt]
\dd \xi_2(x)\,&=\, T_{1}(x-e_1)\,E(x-e_1)\,T_1(x-e_1)^{-1}-E(x)\quad,\\[3pt]
\nn \dd\xi(x)\,&=\,- \xi_1(x) + T_{1}(x-e_1)\,\xi_1(x-e_1)\,T_1(x-e_1)^{-1}  \\
&\qquad 
-\xi_2(x)  + T_{2}(x-e_2)\,\xi_2(x-e_2)\,T_2(x-e_2)^{-1}\quad,
\end{flalign}
\end{subequations}
where we recall that $E(x)$ has been defined in \eqref{eqn:Efield}.
Note that the differential on the degree $-1$ generators $\xi_i(x)$
is given by the Euler-Lagrange equations \eqref{eqn:EL}. The differential
on the degree $-2$ generators $\xi(x)$ has been determined 
also in \cite[Section 4.1]{BPSquantization} for the case where $\bbZ^2$ is replaced
by a finite directed graph.
\sk

To conclude the description of the derived critical locus $\dCrit(S)\simeq \big[Z(\bbZ^2)/\G(\bbZ^2)\big]$, 
we note that the action $r:Z(\bbZ^2)\times \G(\bbZ^2)\to Z(\bbZ^2)$
of the gauge group is given algebraically by the coaction
$\rho : \O\big(Z(\bbZ^2)\big)\to \O\big(Z(\bbZ^2)\big)\otimes \O\big(\G(\bbZ^2)\big)$ which reads on
the generators of $\O\big(Z(\bbZ^2)\big)$ as
\begin{subequations}\label{eqn:rhodCrit}
\begin{flalign}
\rho\big(T_i(x)\big)\,&=\,U(x+e_i)^{-1}\,T_{i}(x)\,U(x)\quad,\\[3pt]
\rho\big(\xi_i(x)\big)\,&=\,U(x)^{-1}\,\xi_{i}(x)\,U(x)\quad,\\[3pt]
\rho\big(\xi(x)\big)\,&=\,U(x)^{-1}\,\xi(x)\,U(x)\quad.
\end{flalign}
\end{subequations}

\subsection{\label{subsec:axialgauge}Axial gauge fixing}
In preparation for the proof of Theorem \ref{theo:locallyconstant} below,
we provide a weakly equivalent description of the
derived critical locus $\dCrit(S)\simeq \big[Z(\bbZ^2)/\G(\bbZ^2)\big]$
from Subsection \ref{subsec:dCritYM} which implements an axial gauge fixing. 
Note that there exist two different axial gauge fixings on the $2$-dimensional square lattice 
$\bbZ^2$, enforcing either that $T_1(x)=\oone$, for all $x\in\bbZ^2$, or that
$T_2(x)=\oone$, for all $x\in\bbZ^2$. Since the two components of the connection
enter symmetrically (up to signs) in \eqref{eqn:OZYM}, it suffices to discuss 
only one of these axial gauge fixings, say $T_2(x)=\oone$ for all $x\in\bbZ^2$. The other 
one will then follow by making some evident minor adaptations to the construction below.
\sk

Let us now formalize this gauge fixing procedure. We define the
affine scheme of connections in axial gauge by
\begin{flalign}\label{eqn:Congf}
\Con^{\gf}(\bbZ^2)\,:=\,\prod_{x\in\bbZ^2}\GL_n\quad,\qquad
\O\big(\Con^{\gf}(\bbZ^2)\big)\,=\,\bigotimes_{x\in\bbZ^2}\O(\GL_n)
\end{flalign}
and consider the embedding $j : \Con^{\gf}(\bbZ^2)\to \Con(\bbZ^2)$
into the affine scheme of connections \eqref{eqn:Con}
which is defined by the $\CAlg$-morphism
\begin{flalign}\label{eqn:jmap}
j^\ast\,:\,\O\big(\Con(\bbZ^2)\big)~&\longrightarrow~\O\big(\Con^{\gf}(\bbZ^2)\big)\quad,\\
\nn T_{1}(x)~&\longmapsto~T(x)\quad,\\
\nn T_2(x)~&\longmapsto~\oone\quad,
\end{flalign}
where $T(x)\in\O\big(\Con^{\gf}(\bbZ^2)\big)$ denote the generators of \eqref{eqn:Congf}.
We also define the affine group scheme of gauge transformations in axial gauge by
\begin{flalign}\label{eqn:Gaugf}
\G^{\gf}(\bbZ^2)\,:=\,\prod_{x_1\in\bbZ}\GL_n\quad,\qquad
\O\big(\G^{\gf}(\bbZ^2)\big)\,=\,\bigotimes_{x_1\in\bbZ}\O(\GL_n)
\end{flalign}
and consider the embedding $\und{j} : \G^{\gf}(\bbZ^2)\to \G(\bbZ^2)$
into the affine group scheme of gauge transformations \eqref{eqn:Gau}
which is defined by the commutative Hopf algebra morphism
\begin{flalign}\label{eqn:junderlinemap}
\und{j}^\ast\,:\,\O\big(\G(\bbZ^2)\big)~\longrightarrow~\O\big(\G^{\gf}(\bbZ^2)\big)~~,\quad
U(x)~&\longmapsto~U(x_1)\quad,
\end{flalign}
where $U(x_1)\in\O\big(\G^{\gf}(\bbZ^2)\big)$ denote the generators of 
\eqref{eqn:Gaugf}.
Let us further consider the action  $r:\Con^{\gf}(\bbZ^2)\times \G^{\gf}(\bbZ^2) \to \Con^{\gf}(\bbZ^2)$ which
is defined by the coaction
\begin{flalign}
\rho \,:\, \O\big(\Con^{\gf}(\bbZ^2)\big)~&\longrightarrow~\O\big(\Con^{\gf}(\bbZ^2)\big)\otimes \O\big(\G^{\gf}(\bbZ^2)\big)\quad,\\
\nn T(x)~&\longmapsto~ U(x_1+1)^{-1}\,T(x)\, U(x_1)\quad.
\end{flalign}
One then directly checks that the diagram
\begin{flalign}
\begin{gathered}
\xymatrix{
\ar[d]_-{j\times \und{j}}\Con^{\gf}(\bbZ^2)\times \G^{\gf}(\bbZ^2)\ar[r]^-{r} ~&~\Con^{\gf}(\bbZ^2)\ar[d]^-{j}\\
\Con(\bbZ^2)\times \G(\bbZ^2) \ar[r]_-{r}~&~\Con(\bbZ^2)
}
\end{gathered}
\end{flalign}
in $\Aff$ commutes, where the bottom horizontal arrow is the action \eqref{eqn:gaugeaction}.
This means that the embedding $j : \Con^{\gf}(\bbZ^2)\to \Con(\bbZ^2)$
is equivariant relative to the affine group scheme morphism $\und{j} : \G^{\gf}(\bbZ^2)\to \G(\bbZ^2)$.
\sk

To implement the axial gauge fixing in the derived critical locus
$\dCrit(S)\simeq \big[Z(\bbZ^2)/\G(\bbZ^2)\big]$, we observe that 
$\O\big(Z(\bbZ^2)\big)\in\dgCAlg^{\leq 0}$ in \eqref{eqn:OZYM}
is a commutative dg-algebra over  $\O\big(\Con(\bbZ^2)\big)\in\CAlg$.
Performing a change-of-base along the $\CAlg$-morphism \eqref{eqn:jmap},
we define
\begin{subequations}\label{eqn:OZYMtg}
\begin{flalign}\label{eqn:OZYMtg1}
\nn \O\big(Z^{\gf}(\bbZ^2)\big)\,&:=\, \O\big(\Con^{\gf}(\bbZ^2)\big) 
\otimes_{\O(\Con(\bbZ^2))}^{} \O\big(Z(\bbZ^2)\big)\\[3pt]
\,&\,\cong\,\bigotimes_{x\in\bbZ^2}\Sym \big(\gl_n[2]\big)\otimes 
\!\!\bigotimes_{(x,i)\in\bbZ^2\times\{1,2\}}\!\!\Sym\big(\gl_n[1]\big)\otimes
\bigotimes_{x\in\bbZ^2}\O(\GL_n)\quad.
\end{flalign}
The induced differential on this commutative dg-algebra is given explicitly by
\begin{flalign}\label{eqn:OZYMtg2}
\dd T(x)\,&=\,0\quad,\\[3pt]
\dd \xi_1(x)\,&=\, E^{\gf}(x)- E^{\gf}(x-e_2)\quad,\\[3pt]
\dd \xi_2(x)\,&=\, T(x-e_1)\,E^{\gf}(x-e_1)\,T(x-e_1)^{-1} -E^{\gf}(x)\quad,\\[3pt]
\dd\xi(x)\,&=\, -\xi_1(x) + T(x-e_1)\,\xi_1(x-e_1)\,T(x-e_1)^{-1} 
-\xi_2(x)  + \xi_2(x-e_2)\quad,
\end{flalign}
\end{subequations}
where $E^{\gf}(x)$ is defined by inserting $T_2(x)=\oone$ and 
$T_1(x)=T(x)$ into \eqref{eqn:Efield}, i.e.\
\begin{flalign}\label{eqn:Efieldgf}
E^{\gf}(x)\,:=\,T(x+e_2)^{-1}\, T(x)\quad.
\end{flalign}
Observe that there exists an induced action
$r: Z^{\gf}(\bbZ^2)\times \G^{\gf}(\bbZ^2)\to Z^{\gf}(\bbZ^2)$ 
of the gauge transformations in axial gauge \eqref{eqn:Gaugf}
which is given algebraically by the coaction
$\rho : \O\big(Z^{\gf}(\bbZ^2)\big)\to \O\big(Z^{\gf}(\bbZ^2)\big)\otimes \O\big(\G^{\gf}(\bbZ^2)\big)$ 
that is defined on the generators of $\O\big(Z^{\gf}(\bbZ^2)\big)$ by
\begin{subequations}\label{eqn:rhodCrittg}
\begin{flalign}
\rho\big(T(x)\big)\,&=\,U(x_1+1)^{-1}\,T(x)\,U(x_1)\quad,\\[3pt]
\rho\big(\xi_i(x)\big)\,&=\,U(x_1)^{-1}\,\xi_{i}(x)\,U(x_1)\quad,\\[3pt]
\rho\big(\xi(x)\big)\,&=\,U(x_1)^{-1}\,\xi(x)\,U(x_1)\quad.
\end{flalign}
\end{subequations}
Let us further observe that \eqref{eqn:jmap} induces a morphism
$j : Z^{\gf}(\bbZ^2)\to Z(\bbZ^2)$ of derived affine schemes
whose opposite $\dgCAlg^{\leq 0}$-morphism reads explicitly as
\begin{flalign}\label{eqn:jmapZ}
 j^\ast\,:\,\O\big(Z(\bbZ^2)\big)~&\longrightarrow~\O\big(Z^{\gf}(\bbZ^2)\big)\quad,\\
\nn T_{1}(x)~&\longmapsto~T(x)\quad,\\
\nn T_2(x)~&\longmapsto~\oone\quad,\\
\nn \xi_i(x)~&\longmapsto~\xi_i(x)\quad,\\
\nn \xi(x)~&\longmapsto~\xi(x)\quad.
\end{flalign}
This morphism is equivariant relative to the affine group scheme morphism 
$\und{j} : \G^{\gf}(\bbZ^2)\to \G(\bbZ^2)$ from \eqref{eqn:junderlinemap}, i.e.\
the diagram
\begin{flalign}
\begin{gathered}
\xymatrix{
\ar[d]_-{j\times \und{j}}Z^{\gf}(\bbZ^2)\times \G^{\gf}(\bbZ^2)\ar[r]^-{r} ~&~Z^{\gf}(\bbZ^2)\ar[d]^-{j}\\
Z(\bbZ^2)\times \G(\bbZ^2) \ar[r]_-{r}~&~Z(\bbZ^2)
}
\end{gathered}
\end{flalign}
in $\dAff$ commutes. This allows us to define a morphism
\begin{flalign}\label{eqn:Jmapsimplicial}
\begin{gathered}
\xymatrix{
\ar[d]_-{j}Z^{\gf}(\bbZ^2) \ar[r]~&~\ar@<1ex>[l]\ar@<-1ex>[l] \ar[d]_-{j\times\und{j}} 
Z^{\gf}(\bbZ^2)\times \G^{\gf}(\bbZ^2)
\ar@<1ex>[r] \ar@<-1ex>[r]~&~\ar@<2ex>[l]\ar@<-2ex>[l]\ar[l] \ar[d]_-{j\times\und{j}^2} 
Z^{\gf}(\bbZ^2)\times \G^{\gf}(\bbZ^2)^{2}~~\cdots\\
Z(\bbZ^2) \ar[r]~&~\ar@<1ex>[l]\ar@<-1ex>[l] 
Z(\bbZ^2)\times \G(\bbZ^2) 
\ar@<1ex>[r] \ar@<-1ex>[r]~&~\ar@<2ex>[l]\ar@<-2ex>[l]\ar[l] 
Z(\bbZ^2)\times \G(\bbZ^2)^{2}~~\cdots
}
\end{gathered}
\end{flalign}
of simplicial diagrams in $\dAff$ as in \eqref{eqn:X/Gsimplicial}, and hence by
passing to the homotopy colimits \eqref{eqn:X/Ghocolim} a $\dSt$-morphism
\begin{flalign}\label{eqn:Jmap}
J\,:\,\dCrit^{\gf}(S)\,:=\,\big[Z^{\gf}(\bbZ^2)/\G^{\gf}(\bbZ^2)\big]~\longrightarrow~\big[Z(\bbZ^2)/\G(\bbZ^2)\big]\,\simeq\,\dCrit(S)
\end{flalign}
between the associated derived quotient stacks. 
\sk

We would like to prove that \eqref{eqn:Jmap} is a weak equivalence
in the model category $\dSt$ of derived stacks, which then provides
our desired weakly equivalent model $\dCrit^{\gf}(S)=\big[Z^{\gf}(\bbZ^2)/\G^{\gf}(\bbZ^2)\big]$
for the derived critical locus $\dCrit(S)\simeq \big[Z(\bbZ^2)/\G(\bbZ^2)\big]$.
Our strategy is to apply Lemma \ref{lem:simplicialhomotopy}, i.e.\ 
we have to find a quasi-inverse for the morphism \eqref{eqn:Jmapsimplicial} of simplicial diagrams.
The key ingredient entering our construction below is given by the elements
$\widehat{T}(x)\in\O\big(Z(\bbZ^2)\big)$, for all $x=(x_1,x_2)\in\bbZ^2$, which 
are defined by choosing any reference point $\overline{x}_2\in \bbZ$ and setting
\begin{flalign}\label{eqn:hatT}
\widehat{T}(x)\,:=\,\begin{cases}
\oone ~&~~\text{for }x_2\,=\,\overline{x}_2\quad,\\
T_2(x_1, x_2-1)\cdots T_2(x_1,\overline{x}_2+1) \,T_2(x_1,\overline{x}_2) ~&~~\text{for }x_2\,>\,\overline{x}_2\quad,\\
T_2(x_1,x_2)^{-1}\cdots T_2(x_1,\overline{x}_2-2)^{-1}\,T_2(x_1,\overline{x}_2-1)^{-1} ~&~~\text{for }x_2\,<\,\overline{x}_2\quad.\\
\end{cases}
\end{flalign}
Note that these elements describe the parallel transport 
along the $x_2$-direction from the reference point $(x_1,\overline{x}_2)$ 
to the point $x=(x_1,x_2)$. Under the coaction \eqref{eqn:rhodCrit} of gauge transformations, 
these elements transform as
\begin{flalign}\label{eqn:hatTtransformation}
\rho\big(\widehat{T}(x)\big)\,=\,U(x)^{-1}\,\widehat{T}(x)\,U(x_1,\overline{x}_2)\quad.
\end{flalign}
We define a morphism $\pi : Z(\bbZ^2)\to Z^{\gf}(\bbZ^2)$ of derived affine schemes
by setting
\begin{flalign}
\pi^\ast\,:\,\O\big(Z^{\gf}(\bbZ^2)\big)~&\longrightarrow~\O\big( Z(\bbZ^2)\big)\quad,\\
\nn T(x)~&\longmapsto~\widehat{T}(x+e_1)^{-1}\,T_1(x)\,\widehat{T}(x)\quad,\\
\nn \xi_i(x)~&\longmapsto~\widehat{T}(x)^{-1}\,\xi_i(x)\,\widehat{T}(x)\quad,\\
\nn \xi(x)~&\longmapsto~\widehat{T}(x)^{-1}\,\xi(x)\,\widehat{T}(x)\quad,
\end{flalign}
and a morphism $\und{\pi} : \G(\bbZ^2)\to \G^{\gf}(\bbZ^2)$ of affine group schemes
by setting
\begin{flalign}
\und{\pi}^\ast\,:\,\O\big(\G^{\gf}(\bbZ^2)\big)~\longrightarrow~\O\big( \G(\bbZ^2)\big)~~,\quad
U(x_1)~\longmapsto~ U(x_1,\overline{x}_2)\quad.
\end{flalign}
Using \eqref{eqn:hatTtransformation}, one directly checks that $\pi$ is equivariant
relative to $\und{\pi}$, i.e.\ the diagram
\begin{flalign}
\begin{gathered}
\xymatrix{
\ar[d]_-{\pi\times \und{\pi}}Z(\bbZ^2)\times \G(\bbZ^2)\ar[r]^-{r} ~&~Z(\bbZ^2)\ar[d]^-{\pi}\\
Z^{\gf}(\bbZ^2)\times \G^{\gf}(\bbZ^2) \ar[r]_-{r}~&~Z^{\gf}(\bbZ^2)
}
\end{gathered}
\end{flalign}
in $\dAff$ commutes. This allows us to define a morphism 
\begin{flalign}\label{eqn:Pimapsimplicial}
\begin{gathered}
\xymatrix{
Z^{\gf}(\bbZ^2) \ar[r]~&~\ar@<1ex>[l]\ar@<-1ex>[l]
Z^{\gf}(\bbZ^2) \times \G^{\gf}(\bbZ^2)
\ar@<1ex>[r] \ar@<-1ex>[r]~&~\ar@<2ex>[l]\ar@<-2ex>[l]\ar[l] 
Z^{\gf}(\bbZ^2)\times \G^{\gf}(\bbZ^2)^{2}~~\cdots\\
\ar[u]^-{\pi} Z(\bbZ^2) \ar[r]~&~\ar@<1ex>[l]\ar@<-1ex>[l] \ar[u]^-{\pi\times\und{\pi}} 
Z(\bbZ^2)\times \G(\bbZ^2) 
\ar@<1ex>[r] \ar@<-1ex>[r]~&~\ar@<2ex>[l]\ar@<-2ex>[l]\ar[l] 
\ar[u]^-{\pi\times\und{\pi}^2} Z(\bbZ^2)\times \G(\bbZ^2)^{2}~~\cdots
}
\end{gathered}
\end{flalign}
of simplicial diagrams in $\dAff$
which goes in the opposite direction of \eqref{eqn:Jmapsimplicial}.
\begin{propo}\label{prop:axialgauge}
The two morphisms of simplicial diagrams in $\dAff$ given in
\eqref{eqn:Jmapsimplicial} and \eqref{eqn:Pimapsimplicial}
are quasi-inverse to each other. Hence, by Lemma \ref{lem:simplicialhomotopy}
the induced morphism \eqref{eqn:Jmap} between the associated derived 
quotient stacks is a weak equivalence in the model category $\dSt$.
\end{propo}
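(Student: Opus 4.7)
The plan is to verify that the two morphisms of simplicial diagrams in \eqref{eqn:Jmapsimplicial} and \eqref{eqn:Pimapsimplicial}, which I will call $J_\bullet$ and $\Pi_\bullet$, satisfy $\Pi_\bullet \circ J_\bullet = \id$ strictly (so that the corresponding simplicial homotopy is trivial) and $J_\bullet \circ \Pi_\bullet \sim \id$ via an explicit simplicial homotopy; Lemma \ref{lem:simplicialhomotopy} then concludes. The first identity is immediate from direct computation at each level $n$: the composite equals $(\pi\circ j)\times(\und{\pi}\circ\und{j})^n$, and on generators one has $\und{j}^\ast\und{\pi}^\ast(U(x_1)) = \und{j}^\ast(U(x_1,\overline{x}_2))=U(x_1)$, while $j^\ast$ sends every $T_2(x)$ to $\oone$ and hence every $\widehat{T}(x)$ to $\oone$, so that $j^\ast\pi^\ast T(x) = \oone\cdot T(x)\cdot\oone = T(x)$ and $j^\ast\pi^\ast$ fixes the generators $\xi_i(x)$ and $\xi(x)$.

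For the other composite, I would first identify $j\circ\pi : Z(\bbZ^2)\to Z(\bbZ^2)$ with a gauge transformation by $\widehat{T}$. The telescoping relation $\widehat{T}(x+e_2) = T_2(x)\,\widehat{T}(x)$ in $\O\big(Z(\bbZ^2)\big)$, which follows directly from the definition \eqref{eqn:hatT} by distinguishing the three cases $x_2\gtreqless\overline{x}_2$, implies that $(j\circ\pi)^\ast$ factors as
\begin{equation*}
\O\big(Z(\bbZ^2)\big) \xrightarrow{\rho_Z} \O\big(Z(\bbZ^2)\big)\otimes \O\big(\G(\bbZ^2)\big) \xrightarrow{\id\otimes\widehat{T}^\ast} \O\big(Z(\bbZ^2)\big),
\end{equation*}
where $\widehat{T}^\ast : \O\big(\G(\bbZ^2)\big)\to\O\big(Z(\bbZ^2)\big)$ is the $\CAlg$-morphism sending $U(x)\mapsto\widehat{T}(x)$. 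Analogously, $(\und{j}\circ\und{\pi})^\ast$ is the Hopf algebra endomorphism $U(x)\mapsto U(x_1,\overline{x}_2)$. With these identifications, the transformation law \eqref{eqn:hatTtransformation} reads precisely as the naturality condition making $\widehat{T}$ a natural transformation, in the action groupoid presented by $(Z(\bbZ^2),\G(\bbZ^2),r)$, from the identity endofunctor to $(j\circ\pi,\und{j}\circ\und{\pi})$.

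Next, following the classical recipe by which a natural transformation between functors on an action groupoid induces a simplicial homotopy on nerves, I would define $h_i : Z(\bbZ^2)\times\G(\bbZ^2)^n \to Z(\bbZ^2)\times\G(\bbZ^2)^{n+1}$, for $0\leq i\leq n$, by inserting $\widehat{T}$ at position $i+1$ and applying $\und{j}\und{\pi}$ to subsequent factors; on points this reads
\begin{equation*}
h_i(z,g_1,\ldots,g_n) \,=\, \big(z,\, g_1,\,\ldots,\, g_i,\, \widehat{T}(z\cdot g_1\cdots g_i),\, (\und{j}\und{\pi})(g_{i+1}),\,\ldots,\,(\und{j}\und{\pi})(g_n)\big),
\end{equation*}
with the corresponding dual $h_i^\ast$ on commutative dg-algebras to be defined in the obvious way. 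The simplicial identities then split into the two endpoint conditions $d_0 h_0 = J_\bullet\circ\Pi_\bullet$ and $d_{n+1} h_n = \id$ --- which follow respectively from the factorization displayed above and from $\widehat{T}(x_1,\overline{x}_2)=\oone$ --- and the remaining "interior" conditions, which reduce to the naturality \eqref{eqn:hatTtransformation} of $\widehat{T}$ combined with the Hopf structure of $\O\big(\G(\bbZ^2)\big)$ and the face/degeneracy formulas \eqref{eqn:X/Gsimplicial}. Applying Lemma \ref{lem:simplicialhomotopy} then yields that \eqref{eqn:Jmap} is a weak equivalence in $\dSt$.

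The main technical obstacle is the dual realization of the $h_i$ at the level of commutative dg-algebras and the verification of the interior simplicial identities, which requires carefully tracking how $\widehat{T}^\ast$ interacts with the Hopf coproduct on $\O\big(\G(\bbZ^2)\big)$ and with the face and degeneracy maps of the bar construction, uniformly in the non-compact index $\bbZ^2$. Conceptually the argument is a standard instance of the bar-construction homotopy associated with a natural transformation of groupoid functors, but the explicit lattice bookkeeping --- and the check that all formulas make sense in the absence of a global regulator --- requires some care.
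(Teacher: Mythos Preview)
Your proposal is correct and follows essentially the same route as the paper: both verify $\Pi_\bullet J_\bullet=\id$ directly, recognize $j\pi$ as the gauge transformation by $\widehat{T}$, and use the naturality \eqref{eqn:hatTtransformation} to produce the required simplicial homotopy. The only cosmetic difference is that the paper phrases the last step as ``$\hat{\eta}=(\id,\widehat{T})$ is a natural isomorphism of internal groupoid functors, now apply the nerve'' (thereby absorbing the verification of the interior simplicial identities into a general fact), whereas you write out the resulting bar-construction homotopies $h_i$ explicitly.
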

\begin{proof}
One directly verifies by using the above formulas  that $\pi\,j=\id_{Z^{\gf}(\bbZ^2)}$
and $\und{\pi}\, \,\und{j} = \id_{\G^{\gf}(\bbZ^2)}$, so the composition
of \eqref{eqn:Jmapsimplicial} followed by \eqref{eqn:Pimapsimplicial} is the identity.
\sk

For the other composition $j\,\pi : Z(\bbZ^2)\to Z(\bbZ^2)$, one finds
by using the above formulas that
\begin{subequations}\label{eqn:TMPformulas}
\begin{flalign}
\pi^\ast\,j^\ast\big(T_1(x)\big)\,&=\,\widehat{T}(x+e_1)^{-1}\, T_1(x)\,\widehat{T}(x)\quad,\\
\pi^\ast\,j^\ast\big(T_2(x)\big)\,&=\,\oone\,=\,\widehat{T}(x+e_2)^{-1}\, T_2(x)\,\widehat{T}(x)\quad,\\
\pi^\ast\,j^\ast\big(\xi_i(x)\big)\,&=\,\widehat{T}(x)^{-1}\, \xi_i(x)\,\widehat{T}(x)\quad,\\
\pi^\ast\,j^\ast\big(\xi(x)\big)\,&=\,\widehat{T}(x)^{-1}\, \xi(x)\,\widehat{T}(x)\quad,
\end{flalign}
\end{subequations}
while for $\und{j}\,\und{\pi} : \G(\bbZ^2)\to \G(\bbZ^2)$ one finds
\begin{flalign}
\und{\pi}^\ast\,\und{j}^\ast\big(U(x)\big)\,=\,U(x_1,\overline{x}_2)\quad.
\end{flalign}
Defining the $\dAff$-morphism 
$\hat{\eta} := (\id_{Z(\bbZ^2)},\eta) : Z(\bbZ^2)\to Z(\bbZ^2)\times \G(\bbZ^2)$ by
\begin{flalign}
\eta^\ast\,:\,\O\big(\G(\bbZ^2)\big)~\longrightarrow~\O\big(Z(\bbZ^2)\big)~~,\quad
U(x)~\longmapsto~\widehat{T}(x)\quad,
\end{flalign}
we observe that \eqref{eqn:TMPformulas} implies that the diagrams
\begin{flalign}
\begin{gathered}
\xymatrix{
\ar[dr]_-{\id_{Z(\bbZ^2)}} Z(\bbZ^2) \ar[r]^-{\hat{\eta}}~&~ Z(\bbZ^2)\times \G(\bbZ^2)\ar[d]^-{\pr_{Z(\bbZ^2)}} ~&~ 
\ar[dr]_-{j\,\pi}Z(\bbZ^2) \ar[r]^-{\hat{\eta}} ~&~ Z(\bbZ^2)\times \G(\bbZ^2)\ar[d]^-{r}\\
~&~Z(\bbZ^2) ~&~ ~&~Z(\bbZ^2)
}
\end{gathered}
\end{flalign}
in $\dAff$ commute. Hence, $\hat{\eta} : \id_{Z(\bbZ^2)/\G(\bbZ^2)} \Rightarrow j\,\pi:
Z(\bbZ^2)/\G(\bbZ^2)\to Z(\bbZ^2)/\G(\bbZ^2)$ 
is a candidate for a natural isomorphism of
functors between groupoid objects in $\dAff$. 
(See e.g.\ \cite{Roberts} for a brief summary and the relevant definitions
of internal category theory.)
It remains to verify naturality of $\hat{\eta}$, which amounts to checking
that the diagram
\begin{flalign}
\begin{gathered}
\xymatrix@C=3.5em{
\ar[d]_-{\text{flip}} 
Z(\bbZ^2)\times \G(\bbZ^2) \ar[rrr]^-{\eta\times \und{j}\,\und{\pi}}~&~~&~~&~\G(\bbZ^2)\times \G(\bbZ^2)\ar[d]^-{m}\\
\G(\bbZ^2)\times Z(\bbZ^2) \ar[rr]_-{(\pr_{\G(\bbZ^2)},\,\eta\,r)}~&~~&~\G(\bbZ^2)\times \G(\bbZ^2)\ar[r]_-{m}~&~\G(\bbZ^2)
}
\end{gathered}
\end{flalign}
in $\dAff$ commutes, where by $m$ we denote the group multiplication of $\G(\bbZ^2)$.
Passing over to the opposite $\dgCAlg^{\leq 0}$-morphisms, we compute for the upper 
path in this diagram
\begin{subequations}
\begin{flalign}
U(x)\,\longmapsto\, U(x)\otimes U(x)\,\longmapsto\, \widehat{T}(x)\,U(x_1,\overline{x}_2)
\end{flalign}
and for the lower path we find
\begin{flalign}
U(x)\,\longmapsto\, U(x)\otimes U(x)\,\longmapsto\,U(x)\, \rho\big(\widehat{T}(x)\big)
\,=\,\widehat{T}(x)\,U(x_1,\overline{x}_2)\quad,
\end{flalign}
\end{subequations}
where in the last step we used the property \eqref{eqn:hatTtransformation}.
This shows that $\hat{\eta}$ is indeed a natural isomorphism. The proof of this
proposition then follows by applying the nerve functor to obtain a simplicial homotopy
between the identity and the composition of \eqref{eqn:Pimapsimplicial} followed by \eqref{eqn:Jmapsimplicial}.
\end{proof}

%%%%%%%%%%%%%%%%%%%%%%%%%%%%%%%%%%%%%%%%%%%%%%%%
%%%%%%%%%%%%%%%%%%%%%%%%%%%%%%%%%%%%%%%%%%%%%%%%

\section{\label{sec:localconstancy}Local derived critical loci and their functorial structure}
In this section we extract suitable local data $\Sol(V)\in\dSt$ from the derived critical locus 
$\dCrit(S)\simeq \big[ Z(\bbZ^2)/ \G(\bbZ^2) \big]$ from Section \ref{sec:dCrit}
which is supported in rectangular subsets
\begin{subequations}\label{eqn:Vsubset}
\begin{flalign}\label{eqn:Vsubset1}
V\,=\, [a,b]\times [c,d]\,\subseteq\,\bbZ^2
\end{flalign} 
of the discrete spacetime $\bbZ^2$, where $[a,b] := \{a,a+1,\dots,b\}\subseteq \bbZ$ 
and $[c,d] := \{c,c+1,\dots,d\}\subseteq \bbZ$ denote discrete intervals. 
Note that we also allow for unbounded intervals,
e.g.\ $a,c=-\infty$ and $b,d=+\infty$ are admissible. 
We shall always assume that both sides of such rectangular subsets 
are of length $\geq 2$, i.e.\ we demand that
\begin{flalign}\label{eqn:Vsubset2}
b-a\,\geq 2\quad\text{and}\quad d-c\geq 2\quad.
\end{flalign}
\end{subequations}
(This is equivalent to demanding that both the discrete intervals
$[a,b]$ and $[c,d]$ contain at least three points.)
These local data will assemble into a functor
$\Sol : \Rect(\bbZ^2)^\op\to \dSt$ from the opposite of the category $\Rect(\bbZ^2)$
consisting of all rectangular subsets $V\subseteq \bbZ^2$ with both sides of length $\geq 2$ and 
morphisms $\iota_V^{V^\prime}:V\to V^\prime$ given by subset inclusions $V\subseteq V^\prime$. 
We shall prove that this functor is locally constant in the sense 
that $\Sol\big(\iota_V^{V^\prime}\big) :\Sol(V^\prime)\to \Sol(V)$
is a weak equivalence of derived stacks for every morphism $\iota_V^{V^\prime}:V\to V^\prime$ in $\Rect(\bbZ^2)$.
\sk

To extract the local data in $\dCrit(S)\simeq \big[ Z(\bbZ^2)/ \G(\bbZ^2) \big]$
which is supported in a rectangular subset \eqref{eqn:Vsubset},
one has to carefully pay attention
to the fact that connections and finite-difference operators on $\bbZ^2$
are extended objects which do not preserve supports. This forces us
to demand different support conditions for the various generators of 
\eqref{eqn:OGau} and \eqref{eqn:OZYM}. A consistent choice is given as follows:
We define
\begin{flalign}\label{eqn:OGV}
\O\big(\G(V)\big)\,\subseteq \, \O\big(\G(\bbZ^2)\big)
\end{flalign}
to be the commutative Hopf subalgebra which is generated by 
$U(x)$, for all $x \in [a,b]\times [c,d]$.
We further define
\begin{flalign}\label{eqn:OZV}
\O\big(Z(V)\big)\,\subseteq\,\O\big(Z(\bbZ^2)\big)
\end{flalign}
to be the graded commutative subalgebra which is generated by
\begin{itemize}
\item $T_1(x)$, for all $x\in [a,b-1]\times [c,d]$,
\item $T_2(x)$, for all $x\in [a,b]\times [c,d-1]$,
\item $\xi_1(x)$, for all $x\in [a,b-1]\times [c+1,d-1]$,
\item $\xi_2(x)$, for all $x\in [a+1,b-1]\times [c,d-1]$, and 
\item $\xi(x)$, for all $x\in[a+1,b-1]\times [c+1,d-1]$.
\end{itemize}
It is helpful to visualize these support restrictions 
($V$ consists of all points in the red rectangle):
\begin{flalign}\label{eqn:supportrestrictions}
\begin{gathered}
\begin{tikzpicture}
\foreach \i in {0,...,5}
\foreach \j in {0,...,5}
\filldraw (\i/2,\j/2) circle (1.5pt);
\foreach \i in {0,...,5}
\foreach \j in {0,...,5}
\filldraw (5+\i/2,\j/2) circle (1.5pt);
\foreach \i in {0,...,5}
\foreach \j in {0,...,5}
\filldraw (10+\i/2,\j/2) circle (1.5pt);
\foreach \i in {0,...,5}
\foreach \j in {0,...,5}
\filldraw (\i/2,5+\j/2) circle (1.5pt);
\foreach \i in {0,...,5}
\foreach \j in {0,...,5}
\filldraw (5+\i/2,5+\j/2) circle (1.5pt);
\foreach \i in {0,...,5}
\foreach \j in {0,...,5}
\filldraw (10+\i/2,5+\j/2) circle (1.5pt);
\draw[thick,->] (-1,3.75) -- node[below]{{\footnotesize $x_1$}} (0,3.75);
\draw[thick,->] (-1,3.75)-- node[left]{{\footnotesize $x_2$}} (-1,4.75);
\draw[red] (1/2 -0.2,5+1/2 -0.2) -- (2 +0.2,5+1/2 -0.2) -- (2 +0.2,7 +0.2) -- (1/2 -0.2,7 +0.2) -- 
(1/2 -0.2 ,5+1/2 -0.2);
\draw[red] (5+ 1/2 -0.2,5+1/2 -0.2) -- (5+ 2 +0.2,5+1/2 -0.2) -- (5+ 2 +0.2,7 +0.2) -- (5+ 1/2 -0.2,7 +0.2) -- 
(5+ 1/2 -0.2 ,5+1/2 -0.2);
\draw[red] (10+ 1/2 -0.2,5+1/2 -0.2) -- (10+ 2 +0.2,5+1/2 -0.2) -- (10+ 2 +0.2,7 +0.2) -- (10+ 1/2 -0.2,7 +0.2) -- 
(10+ 1/2 -0.2 ,5+1/2 -0.2);
\draw[red] (1/2 -0.2,1/2 -0.2) -- (2 +0.2,1/2 -0.2) -- (2 +0.2,2 +0.2) -- (1/2 -0.2,2 +0.2) -- 
(1/2 -0.2 ,1/2 -0.2);
\draw[red] (5+ 1/2 -0.2,1/2 -0.2) -- (5+ 2 +0.2,1/2 -0.2) -- (5+ 2 +0.2,2 +0.2) -- (5+ 1/2 -0.2,2 +0.2) -- 
(5+ 1/2 -0.2 ,1/2 -0.2);
\draw[red] (10+ 1/2 -0.2,1/2 -0.2) -- (10+ 2 +0.2,1/2 -0.2) -- (10+ 2 +0.2,2 +0.2) -- (10+ 1/2 -0.2,2 +0.2) -- 
(10+ 1/2 -0.2 ,1/2 -0.2);
\fill[gray,opacity=0.2] (1/2 -0.2,5+1/2 -0.2) -- (2 +0.2,5+1/2 -0.2) -- (2 +0.2,7 +0.2) -- (1/2 -0.2,7 +0.2) -- 
(1/2 -0.2 ,5+1/2 -0.2);
\fill[gray,opacity=0.2] (5+ 1/2 -0.2,5+1/2 -0.2) -- (5+ 2 +0.2 -0.5 ,5+1/2 -0.2) -- (5+ 2 +0.2  -0.5,7 +0.2) -- (5+ 1/2 -0.2,7 +0.2) -- 
(5+ 1/2 -0.2 ,5+1/2 -0.2);
\fill[gray,opacity=0.2] (10+ 1/2 -0.2,5+1/2 -0.2) -- (10+ 2 +0.2,5+1/2 -0.2) -- (10+ 2 +0.2,7 +0.2  -0.5) -- (10+ 1/2 -0.2,7 +0.2  -0.5) -- 
(10+ 1/2 -0.2 ,5+1/2 -0.2);
\node at (1.25, 8) {$U(x)$};
\node at (5+1.25, 8) {$T_1(x)$};
\node at (10+1.25, 8) {$T_2(x)$};
\fill[gray,opacity=0.2] (1/2 -0.2,1 -0.2) -- (2 +0.2-1/2,1 -0.2) -- (2 +0.2 -1/2,2 +0.2-1/2) -- (1/2 -0.2,2 +0.2-1/2) -- 
(1/2 -0.2 ,1 -0.2);
\fill[gray,opacity=0.2] (5+ 1/2+1/2 -0.2,1/2 -0.2) -- (5+ 2 +0.2 -0.5 ,1/2 -0.2) -- (5+ 2 +0.2  -0.5,2 +0.2-1/2) -- (5+ 1/2+1/2 -0.2,2 +0.2-1/2) -- 
(5+ 1/2 +1/2-0.2 ,1/2 -0.2);
\fill[gray,opacity=0.2] (10+ 1/2+1/2 -0.2,1/2 -0.2+1/2) -- (10+ 2 +0.2 -1/2,1/2 -0.2+1/2) -- (10+ 2 +0.2-1/2,2 +0.2  -0.5) -- (10+ 1/2 -0.2 +1/2,2 +0.2  -0.5) -- 
(10+ 1/2 +1/2-0.2 ,1/2 +1/2-0.2);
\node at (1.25, 3) {$\xi_1(x)$};
\node at (5+1.25, 3) {$\xi_2(x)$};
\node at (10+1.25, 3) {$\xi(x)$};
\end{tikzpicture}
\end{gathered}
\end{flalign}
One directly checks that the differential \eqref{eqn:OZYM}
closes on these generators, hence \eqref{eqn:OZV} is a commutative dg-subalgebra.
Furthermore, one shows that the coaction \eqref{eqn:rhodCrit} restricts
to $\rho : \O\big(Z(V)\big)\to \O\big(Z(V)\big)\otimes \O\big(\G(V)\big)$. Hence, we can define
the derived quotient stack
\begin{flalign}
\Sol(V)\,:=\, \big[Z(V)/\G(V)\big]\,\in\,\dSt\quad,
\end{flalign}
for each object $V\in \Rect(\bbZ^2)$, which captures the local data
of $\dCrit(S)\simeq \big[Z(\bbZ^2)/\G(\bbZ^2)\big]$ that is supported
in $V\subseteq \bbZ^2$.
\sk

Given any morphism $\iota_V^{V^\prime} : V\to V^\prime$ in $\Rect(\bbZ^2)$,
one evidently has that 
\begin{subequations}
\begin{flalign}
\O\big(\G(V)\big)\,\subseteq\,\O\big(\G(V^\prime)\big)
\end{flalign}
is a commutative Hopf subalgebra and that
\begin{flalign}
\O\big(Z(V)\big)\,\subseteq\,\O\big(Z(V^\prime)\big)
\end{flalign}
\end{subequations}
is a commutative dg-subalgebra. These inclusions are compatible with the coactions,
hence we obtain a $\dSt$-morphism
\begin{flalign}
\Sol\big(\iota_V^{V^\prime}\big)\,:\,\Sol(V^\prime)\,=\,\big[Z(V^\prime)/\G(V^\prime)\big]~\longrightarrow~\Sol(V)\,=\,\big[Z(V)/\G(V)\big]
\end{flalign}
which describes the restriction of local data along the $\Rect(\bbZ^2)$-morphism
$\iota_V^{V^\prime} : V\to V^\prime$. This defines a functor
\begin{flalign}\label{eqn:Solfunctor}
\Sol\,:\,\Rect(\bbZ^2)^\op~\longrightarrow~\dSt
\end{flalign}
to the model category of derived stacks.
\begin{theo}\label{theo:locallyconstant}
The functor \eqref{eqn:Solfunctor} is locally constant
in the sense that $\Sol\big(\iota_V^{V^\prime}\big) :\Sol(V^\prime)\to \Sol(V)$
is a weak equivalence in the model category $\dSt$,
for every morphism $\iota_V^{V^\prime} : V\to V^\prime$ in $\Rect(\bbZ^2)$.
\end{theo}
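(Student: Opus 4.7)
The plan is to combine a local version of the axial gauge fixing from Proposition \ref{prop:axialgauge} with a Koszul-type contractibility argument. First I would reduce the general case to elementary enlargements of a rectangle: any $\iota_V^{V'}$ in $\Rect(\bbZ^2)$ factors as a finite sequence of unit extensions in one of the four cardinal directions, so by two-out-of-three for weak equivalences, together with the $x_1\leftrightarrow x_2$ symmetry (swapping also $\xi_1\leftrightarrow\xi_2$) and the reflection symmetries of the Euler-Lagrange equations \eqref{eqn:EL}, it suffices to treat the upward extension $V=[a,b]\times[c,d]\hookrightarrow V'=[a,b]\times[c,d+1]$. Next I would adapt Subsection \ref{subsec:axialgauge} to each rectangle: define gauge-fixed local data $Z^{\gf}(V)$ and $\G^{\gf}(V)$ in analogy with \eqref{eqn:Congf}--\eqref{eqn:rhodCrittg}, imposing the same support conditions used to define $\Sol(V)$. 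Choosing a reference row $\overline{x}_2\in[c,d]$ (possible since $[c,d]$ is non-empty), the parallel transports $\widehat{T}(x)$ from \eqref{eqn:hatT} are well-defined inside $V$ and assemble into a quasi-inverse of the gauge-fixing inclusion exactly as in the proof of Proposition \ref{prop:axialgauge}; by Lemma \ref{lem:simplicialhomotopy} this yields a weak equivalence $\dCrit^{\gf}(V):=[Z^{\gf}(V)/\G^{\gf}(V)]\xrightarrow{\sim}\Sol(V)$ natural in $V$. Since for the upward extension the gauge-fixed gauge groups coincide, $\G^{\gf}(V)=\G^{\gf}(V')$ (both are indexed by the unchanged interval $[a,b]$), the problem reduces to showing that the natural $\dgCAlg^{\leq 0}$-inclusion $\O(Z^{\gf}(V))\hookrightarrow\O(Z^{\gf}(V'))$ is a $\G^{\gf}(V)$-equivariant quasi-isomorphism.

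The core of the argument is the analysis of the extra generators of $\O(Z^{\gf}(V'))$ over $\O(Z^{\gf}(V))$. After the change of variables $T(x_1,d+1)\mapsto E^{\gf}(x_1,d):=T(x_1,d+1)^{-1}T(x_1,d)\in\GL_n$, they consist of the $\GL_n$-valued generators $E^{\gf}(x_1,d)$ for $x_1\in[a,b-1]$ in degree $0$, the ghosts $\xi_1(x_1,d)$ for $x_1\in[a,b-1]$ and $\xi_2(x_1,d)$ for $x_1\in[a+1,b-1]$ in degree $-1$, and the second ghosts $\xi(x_1,d)$ for $x_1\in[a+1,b-1]$ in degree $-2$. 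Using \eqref{eqn:OZYMtg2}, the pair $\bigl(E^{\gf}(x_1,d),\xi_1(x_1,d)\bigr)$ with $\dd\xi_1(x_1,d)=E^{\gf}(x_1,d)-E^{\gf}(x_1,d-1)$ forms a contractible Koszul resolution of the relation $E^{\gf}(x_1,d)=E^{\gf}(x_1,d-1)$ over $\O(Z^{\gf}(V))$, after the further substitution $E^{\gf}(x_1,d)=E^{\gf}(x_1,d-1)\cdot g(x_1)$ which converts it into the standard Koszul resolution of the unit $g(x_1)=\oone$ in $\GL_n$. After this first contraction, a direct calculation using the identity $T(x_1-1,d-1)=T(x_1-1,d)\cdot E^{\gf}(x_1-1,d-1)$ together with the self-commutativity of $E^{\gf}(x_1-1,d-1)$ shows that $\dd\xi_2(x_1,d)=\dd\xi_2(x_1,d-1)$ exactly; hence the combination $\eta(x_1):=\xi_2(x_1,d)-\xi_2(x_1,d-1)$ is a cocycle in degree $-1$ and $\dd\xi(x_1,d)=-\eta(x_1)$, so $\bigl(\eta(x_1),\xi(x_1,d)\bigr)$ yields a second stage of contractible Koszul pairs over $\O(Z^{\gf}(V))$. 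Combining these two contractions in a $\G^{\gf}(V)$-equivariant manner (the coactions \eqref{eqn:rhodCrittg} are all adjoint, hence preserved by the changes of variables) gives the desired equivariant quasi-isomorphism, and the resulting weak equivalence of derived quotient stacks follows because homotopy colimits in $\dSt$ preserve level-wise weak equivalences of the associated simplicial diagrams.

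The hard part is the second contraction step: one must verify that after substituting $E^{\gf}(x_1,d)=E^{\gf}(x_1,d-1)$ from the first stage the difference $\dd\xi_2(x_1,d)-\dd\xi_2(x_1,d-1)$ vanishes exactly (and not merely cohomologously), and then construct an explicit $\G^{\gf}(V)$-equivariant contracting homotopy for the resulting Koszul pair $\bigl(\eta(x_1),\xi(x_1,d)\bigr)$. This hinges on the multiplicative identity relating $T(x_1-1,d-1)$ and $T(x_1-1,d)$ and on the fact that $E^{\gf}(x_1-1,d-1)$ commutes with itself; carrying out all substitutions while maintaining strict equivariance with respect to \eqref{eqn:rhodCrittg} is the main technical burden of the proof.
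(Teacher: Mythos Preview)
Your strategy coincides with the paper's: reduce to a primitive $x_2$-extension, pass to axial gauge locally so that the gauge groups agree, and then contract the new generators in two stages---first the pairs $\bigl(E^{\gf}(x_1,d),\xi_1(x_1,d)\bigr)$, then the pairs $\bigl(\xi_2(x_1,d)-\xi_2(x_1,d-1),\xi(x_1,d)\bigr)$. The paper makes two points more precise than your sketch. First, your ``after the first contraction'' is formalized there by a change-of-base: one isolates the dg-subalgebra $\widetilde{A}'\subseteq\O\bigl(Z^{\gf}(V')\bigr)$ generated in degree~$0$ together with the $\xi_1(x_1,d)$, proves separately (Appendix~\ref{app:details}) that $A\subseteq\widetilde{A}'$ is a quasi-isomorphism, and then uses left properness of $\dgCAlg^{\leq 0}$ to base-change along the explicit retraction $r:\widetilde{A}'\to A$ sending $T(x_1,d+1)\mapsto T(x_1,d)\,E^{\gf}(x_1,d-1)^{-1}$ and $\xi_1(x_1,d)\mapsto 0$; only after this step is the second-stage relation $\dd\xi_2(x_1,d)=\dd\xi_2(x_1,d-1)$ and $\dd\xi(x_1,d)=-\eta(x_1)$ literally true, because $\dd\xi(x_1,d)$ in $\O\bigl(Z^{\gf}(V')\bigr)$ also contains $\xi_1$-terms. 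Second, the $\G^{\gf}(V)$-equivariance of the contracting homotopies that you highlight as the main burden is in fact unnecessary: since $\G^{\gf}(V)=\G^{\gf}(V')$, a plain quasi-isomorphism $\O\bigl(Z^{\gf}(V)\bigr)\hookrightarrow\O\bigl(Z^{\gf}(V')\bigr)$ in $\dgCAlg^{\leq 0}$ already gives a level-wise weak equivalence of the nerves and hence of the homotopy colimits. Two smaller remarks: the factorization into unit extensions may be transfinite when the target rectangle is unbounded (quasi-isomorphisms are closed under transfinite composition, so this is harmless), and the ``standard Koszul resolution of $g=\oone$ in $\GL_n$'' is not entirely standard because $\GL_n$ is not affine space---the paper handles this via flat base change from the polynomial model $\mathbb{A}^{n\times n}$.
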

\begin{proof}
We start by observing that every morphism 
$\iota_V^{V^\prime}: V\to  V^\prime$ in $\Rect(\bbZ^2)$
admits a factorization
\begin{flalign}
V\,=\,[a,b]\times [c,d]~\longrightarrow~[a,b]\times [c^\prime,d^\prime]~
\longrightarrow~ [a^\prime,b^\prime]\times[c^\prime,d^\prime]\,=\,V^\prime
\end{flalign}
into an interval inclusion along the $x_2$-direction
and an interval inclusion along the $x_1$-direction.
Since the class of weak equivalences is stable under compositions,
it suffices to prove that the functor $\Sol:\Rect(\bbZ^2)^\op \to \dSt$ 
assigns a weak equivalence in $\dSt$ to 
each of these more basic morphisms. Leveraging the symmetry (up to signs)
of \eqref{eqn:OGau}, \eqref{eqn:OZYM} and \eqref{eqn:rhodCrit} under exchanging $x_1$ and $x_2$,
we can restrict our attention to $x_2$-interval inclusions 
$\iota_V^{V^\prime}: V=[a,b]\times[c,d] \to  V^\prime=[a,b]\times [c^\prime,d^\prime]$. 
\sk

Recalling from Subsection \ref{subsec:axialgauge}
the derived critical locus $\dCrit^{\gf}(S)=\big[Z^{\gf}(\bbZ^2)/\G^{\gf}(\bbZ^2)\big]$ in axial gauge $T_2(x)=\oone$,
for all $x\in\bbZ^2$,  we can extract with the same support
conditions as in \eqref{eqn:supportrestrictions} its local data
and obtain a $\dSt$-morphism 
$\Sol^{\gf}\big(\iota_V^{V^\prime}\big):
\Sol^{\gf}(V^\prime) =\big[Z^{\gf}(V^\prime)/\G^{\gf}(V^\prime)\big]\to \Sol^{\gf}(V)=\big[Z^{\gf}(V)/\G^{\gf}(V)\big]$.
The $\dSt$-morphism in \eqref{eqn:Jmap} restricts to local data and yields a commutative square
\begin{flalign}\label{eqn:squaregf}
\begin{gathered}
\xymatrix@C=4em{
\ar[d]_-{J_{V^\prime}}^-{\sim} \Sol^{\gf}(V^\prime) \ar[r]^-{\Sol^{\gf}(\iota_V^{V^\prime})}~&~\Sol^{\gf}(V) \ar[d]^-{J_{V}}_-{\sim} \\
\Sol(V^\prime) \ar[r]_-{\Sol(\iota_V^{V^\prime})}~&~\Sol(V)
}
\end{gathered}
\end{flalign}
in $\dSt$. The vertical arrows are weak equivalences in $\dSt$
because the proof of Proposition \ref{prop:axialgauge} applies locally to $V\subseteq \bbZ^2$
and $V^\prime\subseteq \bbZ^2$. Using $2$-out-of-$3$ for weak equivalences,
our problem of proving that $\Sol\big(\iota_V^{V^\prime}\big)$
is a weak equivalence is equivalent to showing that $\Sol^{\gf}\big(\iota_V^{V^\prime}\big)$ 
is one. Using that our morphism $\iota_V^{V^\prime} : V\to V^\prime$
increases the rectangular subset only in the $x_2$-direction
and recalling that the gauge group in axial gauge \eqref{eqn:Gaugf} is independent of
$x_2$, we find that
\begin{flalign}
\O\big(\G^{\gf}(V)\big)\,=\,\O\big(\G^{\gf}(V^\prime)\big)\,=\, \bigotimes_{x_1\in[a,b]}\O(\GL_n)\quad.
\end{flalign}
Hence, if we can prove that the inclusion
\begin{flalign}\label{eqn:gfinclusion}
\O\big(Z^{\gf}(V)\big)\,\subseteq\,\O\big(Z^{\gf}(V^\prime)\big)
\end{flalign}
is a weak equivalence in $\dgCAlg^{\leq 0}$, it would
follow that the induced morphism $\Sol^{\gf}\big(\iota_V^{V^\prime}\big):\Sol^{\gf}(V^\prime) \to 
\Sol^{\gf}(V)$ between homotopy colimits \eqref{eqn:X/Ghocolim} 
is a weak equivalence in $\dSt$.
\sk

Our strategy is to break the problem of proving that \eqref{eqn:gfinclusion} is a weak equivalence
into smaller steps.  
Every $x_2$-interval inclusion $[a,b]\times [c,d]\subseteq [a,b]\times [c^\prime,d^\prime]$ 
can be presented as a (possibly transfinite) composition of primitive inclusions of two types: 
The first type increases the right endpoint $d\mapsto d+1$
by one step and the second type decreases the left endpoint $c\mapsto c-1$ by one step. 
Recalling that quasi-isomorphisms are closed under transfinite compositions, 
it suffices to prove that \eqref{eqn:gfinclusion} 
is a weak equivalence for any primitive inclusion. Furthermore, 
since the proofs for the two types of primitive inclusions are 
similar, it suffices to consider only one of them.
\sk

Using the above observations, we consider in what follows a $\Rect(\bbZ^2)$-morphism
\begin{flalign}
\iota_V^{V^\prime} \,:\, V=[a,b]\times[c,d]~\longrightarrow~ V^\prime\,=\,[a,b]\times [c,d+1]
\end{flalign}
which increases the right $x_2$-interval endpoint by one step.
Let us denote by
\begin{subequations}\label{eqn:AandA'}
\begin{flalign}
A\,&:=\, \O\big(Z^\gf(V)\big)^0\,=\,\bigotimes_{x\in [a,b-1]\times[c,d]}\O(\GL_n)\,\subseteq\, \O\big(Z^\gf(V)\big)
\quad,\\[4pt]
A^\prime\,&:=\, \O\big(Z^\gf(V^\prime)\big)^0\,=\,\bigotimes_{x\in [a,b-1]\times[c,d+1]}
\O(\GL_n)\,\subseteq\, \O\big(Z^\gf(V^\prime)\big)
\end{flalign}
\end{subequations}
the commutative subalgebras consisting of all elements of degree $0$.
Note that these are also commutative dg-subalgebras (with trivial differential)
because $\O\big(Z^\gf(V)\big),\O\big(Z^\gf(V^\prime)\big)\in \dgCAlg^{\leq 0}$
are concentrated in non-positive degrees. The inclusion $A\subseteq A^\prime$ is
clearly not a weak equivalence because $A$ and $A^\prime$ are discrete and $A^\prime$ 
contains additional elements which are supported in 
$[a,b-1]\times \{d+1\}\subseteq \bbZ^2$. To remedy this issue,
we introduce a bigger commutative  dg-subalgebra 
\begin{flalign}\label{eqn:widetildeA'}
\widetilde{A}^\prime\,:=\,\!\! \bigotimes_{x\in [a,b-1]\times[c,d+1]}\!\!\O(\GL_n) \otimes \!\!\bigotimes_{x\in [a,b-1]\times\{d\}}\!\! \Sym\big(\gl_n[1]\big)\, \subseteq\, \O\big(Z^\gf(V^\prime)\big)
\end{flalign}
which further includes the degree $-1$ generators $\xi_1(x_1,d)\in \O\big(Z^\gf(V^\prime)\big)$,
for all $x_1\in[a,b-1]$, that are \textit{not} contained in 
$\O\big(Z^\gf(V)\big)\subseteq \O\big(Z^\gf(V^\prime)\big)$. (Recall the 
support conditions from \eqref{eqn:supportrestrictions}.) We show in Appendix
\ref{app:details} that the inclusion $A\subseteq \widetilde{A}^\prime$
is a weak equivalence in $\dgCAlg^{\leq 0}$. Let us further observe that there exists
a retraction
\begin{flalign}\label{eqn:retract}
\begin{gathered}
\xymatrix{
\ar@/_1.5pc/[rr]_-{\id_A}A\ar[r]^-{\subseteq }~&~\widetilde{A}^\prime\ar[r]^-{r}~&~A
}
\end{gathered}
\end{flalign}
defined by the $\dgCAlg^{\leq 0}$-morphism 
\begin{flalign}\label{eqn:retractformula}
r\,:\, \widetilde{A}^\prime~&\longrightarrow~A\quad,\\
\nn A\ni a~&\longmapsto~ a\quad,\\
\nn T(x_1,d+1)~&\longmapsto~ T(x_1,d)\,E^\gf(x_1,d-1)^{-1} \,=\,T(x_1,d)\,T(x_1,d-1)^{-1}\,T(x_1,d) \quad,\\
\nn \xi_1(x_1,d)~&\longmapsto~0\quad.
\end{flalign}
Since $A\subseteq \widetilde{A}^\prime$
is a weak equivalence it follows from \eqref{eqn:retract} that 
$r : \widetilde{A}^\prime\to A$ is one too.
\sk

With these preparations, we can derive 
equivalent but simpler characterizations 
for \eqref{eqn:gfinclusion} being a weak equivalence.
From a change-of-base along the weak equivalence $A\subseteq \widetilde{A}^\prime$,
we obtain a commutative triangle
\begin{flalign}
\begin{gathered}
\xymatrix{
\widetilde{A}^\prime\otimes _{A}\O\big(Z^\gf(V)\big) \ar[r]~&~\O\big(Z^\gf(V^\prime)\big)\\
\O\big(Z^\gf(V)\big)  \ar[u]^-{\sim}\ar[ru]_-{\subseteq}~&~ 
}
\end{gathered}
\end{flalign}
in $\dgCAlg^{\leq 0}$. The left vertical arrow is a weak equivalence
because $\O\big(Z^\gf(V)\big)$ is a semi-free extension of $A$, which
by left properness of $\dgCAlg^{\leq 0}$ (see e.g.\ \cite[Corollary 3.4]{Manetti}) 
implies that $(-)\otimes_A\O\big(Z^\gf(V)\big)$
preserves weak equivalences. Hence, by $2$-out-of-$3$ we can equivalently prove 
that the top horizontal arrow is a weak equivalence.
Applying to the top horizontal arrow a change-of-base along the retraction $r : \widetilde{A}^\prime\to A$
(which is a weak equivalence too) yields a commutative diagram
\begin{flalign}
\begin{gathered}
\xymatrix@C=1.5em{
\ar[d]_-{\sim}\widetilde{A}^\prime\otimes _{A}\O\big(Z^\gf(V)\big) \ar[rr]~&~~&~\O\big(Z^\gf(V^\prime)\big)\ar[d]^-{\sim}\\
\ar[dr]_-{\cong} A\otimes_{\widetilde{A}^\prime}\widetilde{A}^\prime\otimes _{A}\O\big(Z^\gf(V)\big) \ar[rr]~&~~&~
A\otimes_{\widetilde{A}^\prime}\O\big(Z^\gf(V^\prime)\big)\\
~&~ \O\big(Z^\gf(V)\big)\ar[ru]_-{k}~&~
}
\end{gathered}
\end{flalign}
in $\dgCAlg^{\leq 0}$. The left and right vertical arrows are weak equivalences
because $\widetilde{A}^\prime\otimes_{A}\O\big(Z^\gf(V)\big)$ and 
$\O\big(Z^\gf(V^\prime)\big)$ are semi-free extensions of $\widetilde{A}^\prime$. The downward-right
pointing arrow is an isomorphism because of the retraction property \eqref{eqn:retract}.
Hence, by $2$-out-of-$3$ we can equivalently prove 
that the upward-right pointing arrow labeled by $k$ is a weak equivalence.
\sk

By direct inspection, one observes that the $\dgCAlg^{\leq 0}$-morphism 
$k :\O\big(Z^\gf(V)\big)\to A\otimes_{\widetilde{A}^\prime}\O\big(Z^\gf(V^\prime)\big)$
is a semi-free extension by the generators (recall the support properties \eqref{eqn:supportrestrictions})
\begin{itemize}
\item $\xi_2(x_1,d)$, for all $x_1\in [a+1,b-1]$, and
\item $\xi(x_1,d)$, for all $x_1\in [a+1,b-1]$.
\end{itemize}
Using the explicit formulas for the retraction \eqref{eqn:retractformula}
and the differential \eqref{eqn:OZYMtg} of $\O\big(Z^\gf(V^\prime)\big)$, 
one computes the differential of these generators in 
$A\otimes_{\widetilde{A}^\prime}\O\big(Z^\gf(V^\prime)\big)$ 
and finds that
\begin{flalign}
\dd\xi_2(x_1,d)\,=\, \dd\xi_2(x_1,d-1)\quad,\qquad
\dd\xi(x_1,d)\,=\,-\xi_2(x_1,d) + \xi_2(x_1,d-1)\quad.
\end{flalign}
It follows that there exists a retraction of $k$ given by the $\dgCAlg^{\leq 0}$-morphism
\begin{flalign}
q\, :\, A\otimes_{\widetilde{A}^\prime}\O\big(Z^\gf(V^\prime)\big)~&\longrightarrow~
\O\big(Z^\gf(V)\big)\quad,\\
\nn \O\big(Z^\gf(V)\big)\ni a~&\longmapsto~a\quad,\\
\nn \xi_2(x_1,d)~&\longmapsto~\xi_2(x_1,d-1)\\
\nn \xi(x_1,d)~&\longmapsto~0\quad,
\end{flalign}
i.e.\ $q\,k=\id$. This is further a deformation retraction $\partial(h) = \id - k\,q$ for the 
$\O\big(Z^\gf(V)\big)$-linear homotopy $h$ which is defined on the relative generators by
\begin{flalign}
h\big(\xi_2(x_1,d)\big)\,=\,-\xi(x_1,d)\quad,\qquad h\big(\xi(x_1,d)\big)\,=\,0\quad,
\end{flalign}
and extended to the semi-free extension $A\otimes_{\widetilde{A}^\prime}\O\big(Z^\gf(V^\prime)\big)$
of $\O\big(Z^\gf(V)\big)$ via the usual symmetric tensor trick, see e.g.\ \cite{Berglund}.
This completes the proof that $k$ is a weak equivalence.
\end{proof}

%%%%%%%%%%%%%%%%%%%%%%%%%%%%%%%%%%%%%%%%%%%%%%%%
%%%%%%%%%%%%%%%%%%%%%%%%%%%%%%%%%%%%%%%%%%%%%%%%

\section{\label{sec:PFA}$\mathbf{dgCat}$-valued prefactorization algebra of classical observables}
The aim of this section is to construct out of the local derived
critical loci from Section \ref{sec:localconstancy} a locally constant
prefactorization algebra on the discrete spacetime $\bbZ^2$
which takes values in dg-categories. Category-valued prefactorization
algebras appeared before in the works of Ben-Zvi, Brochier and Jordan \cite{Jordan1,Jordan2}
in the context of representation theory 
and they have been proposed as a categorification of algebraic quantum field 
theory in \cite{BPSW,BSchapter}. The kind of prefactorization algebras
on $\bbZ^2$ we will consider below are encoded by the following operad.
\begin{defi}\label{def:PFA}
The \textit{rectangular prefactorization operad} $\P_{\bbZ^2}$ 
on the square lattice $\bbZ^2$ is defined as the 
following colored symmetric operad:
\begin{itemize}
\item An object in $\P_{\bbZ^2}$ is a bounded rectangular subset
$V = [a,b]\times [c,d]\subseteq \bbZ^2$ with both sides of length $\geq 2$.

\item There exists exactly one $n$-ary operation
$\iota_{\und{V}}^V : \und{V} := (V_1,\dots,V_n)\to V$ if $V_i\subseteq V$, for all $i=1,\dots,n$,
and $V_i\cap V_j =\varnothing$, for all $i\neq j=1,\dots,n$. In particular,
there exists a unique operation $\varnothing\to V$ of arity zero for each object $V$.
\end{itemize}
Operadic composition is forced by these definitions and the operadic
units are the $1$-ary operations $\iota_V^V:V\to V$ associated with the identities 
$V=V$. The permutation action $\big(\iota_{\und{V}}^V: \und{V} \to V\big)
\mapsto \big(\iota_{\und{V}\sigma}^V : \und{V}\sigma \to V\big)$
is defined by permuting tuples $\und{V}\sigma = (V_{\sigma(1)},\dots,V_{\sigma(n)})$, for
all $\sigma\in\Sigma_n$.
\end{defi}
\begin{rem}\label{rem:bounded}
In contrast to the category $\Rect(\bbZ^2)$ from Section \ref{sec:localconstancy},
we demand in the definition of the prefactorization operad $\P_{\bbZ^2}$
that all subsets $V = [a,b]\times [c,d]\subseteq \bbZ^2$ are bounded, i.e.\ $V\subseteq \bbZ^2$
only consists of finitely many points. This finiteness property allows us to 
provide a simple and explicit model \eqref{eqn:Fonobjects} for the dg-category $\FFF(V)$
assigned by our prefactorization algebra to $V$.
\end{rem}

A (rectangular) prefactorization algebra on $\bbZ^2$ is then defined
as a pseudo-multifunctor $\FFF : \P_{\bbZ^2}\to \dgCat$
to the symmetric monoidal $2$-category $\dgCat$ of dg-categories,
dg-functors and dg-natural transformations. (See e.g.\
\cite[Section 2.3]{Keller} or \cite[Section 1.4]{Kelly}
for an explicit description of the symmetric monoidal structure.) 
On objects and $1$-ary operations in $\P_{\bbZ^2}$, 
we define our prefactorization algebra $\FFF$  by composing the
functor $\Sol : \Rect(\bbZ^2) \to \dSt^\op$ from \eqref{eqn:Solfunctor},
which assigns the local derived critical loci of our lattice Yang-Mills model,
with the pseudo-functor $\QCoh : \dSt^\op\to \dgCat$ from \eqref{eqn:QCohdSt}
which assigns dg-categories of quasi-coherent complexes. 
Recalling that $\Sol(V) = \big[Z(V)/\G(V)\big]\in\dSt$ is a derived quotient
stack with $\G(V) = \prod_{x\in V}\GL_n$ a reductive affine group scheme 
(since $V\subseteq \bbZ^2$ is a finite subset),
we obtain by using \eqref{eqn:QCoh[X/G]} an explicit model
\begin{flalign}\label{eqn:Fonobjects}
\FFF(V)\,:=\,\QCoh\big(\Sol(V)\big)\,\simeq\,{}_{\O(Z(V))}\dgMod_{\mathrm{cof}}^{\O(\G(V))}\,\in\,\dgCat
\end{flalign}
for the dg-category assigned to $V\in\P_{\bbZ^2}$ in terms of cofibrant 
$\O\big(Z(V)\big)$-dg-modules $M$ with a compatible $\O\big(\G(V)\big)$-coaction
$\rho_M : M\to M\otimes \O\big(\G(V)\big)$.
Given any $1$-ary operation $\iota_{V}^{V^\prime}: V\to V^\prime$ in $\P_{\bbZ^2}$, 
we obtain the dg-functor
\begin{flalign} \label{eqn:Fon1ary}
\FFF\big(\iota_V^{V^\prime}\big)\, :=\, \QCoh\big(\Sol\big(\iota_V^{V^\prime}\big)\big)\,:\,
\FFF(V)~\longrightarrow~\FFF(V^\prime)
\end{flalign}
which admits the following explicit description:
To an object $(M,\rho_M)$ in $\QCoh\big(\Sol(V)\big)$, it 
assigns the object in $\QCoh\big(\Sol(V^\prime)\big)$ which consists of
the cofibrant $\O\big(Z(V^\prime)\big)$-dg-module 
$\O\big(Z(V^\prime)\big)\otimes_{\O(Z(V))}M$
obtained by a change-of-base along the inclusion $\O\big(Z(V)\big)\subseteq \O\big(Z(V^\prime)\big)$
and the tensor product $\O\big(\G(V^\prime)\big)$-coaction
associated with $\rho : \O\big(Z(V^\prime)\big)\to \O\big(Z(V^\prime)\big)\otimes \O\big(\G(V^\prime)\big)$
and $\rho_M : M\to M\otimes\O\big(\G(V)\big)\to M\otimes \O\big(\G(V^\prime)\big)$, where
the last step uses the inclusion $\O\big(\G(V)\big)\subseteq \O\big(\G(V^\prime)\big)$.
On morphisms, the dg-functor $\FFF\big(\iota_V^{V^\prime}\big)$ is given
by change-of-base along the inclusion $\O\big(Z(V)\big)\subseteq \O\big(Z(V^\prime)\big)$,
which preserves the coaction equivariance properties of the 
hom-complexes.
\sk

It remains to define the prefactorization algebra $\FFF : \P_{\bbZ^2}\to \dgCat$
on operations of arity $0$ and $\geq 2$ in $\P_{\bbZ^2}$.
For an arity $0$ operation $\iota_{\varnothing}^V:\varnothing \to V$ in $\P_{\bbZ^2}$, 
this amounts to defining a dg-functor
\begin{flalign}\label{eqn:Fon0ary}
\FFF\big(\iota_{\varnothing}^{V}\big)\,:\,\FFF(\varnothing) \,=\,\mathsf{B}\bbK~\longrightarrow~\FFF(V) 
\end{flalign}
from the dg-category $\mathsf{B}\bbK\in\dgCat$ (the monoidal unit of $\dgCat$)
which consists of a single object with hom-complex $\bbK$. This datum is equivalent to
picking an object in $\FFF(V)$, for which we take 
the rank $1$ free $\O\big(Z(V)\big)$-dg-module with its given coaction 
$\rho : \O\big(Z(V)\big)\to \O\big(Z(V)\big)\otimes \O\big(\G(V)\big)$.
Given any $(n\geq 2)$-ary operation $\iota_{\und{V}}^V : \und{V} = (V_1,\dots,V_n)\to V$
in $\P_{\bbZ^2}$, we define the dg-functor
\begin{flalign}\label{eqn:Fonnary}
\FFF\big(\iota_{\und{V}}^V\big)\,:\,\bigotimes_{i=1}^n \FFF(V_i)~\longrightarrow~\FFF(V)
\end{flalign}
by the following construction: Recall that an object $\big((M_1,\rho_{M_1}),\dots,(M_n,\rho_{M_n})\big)$
in $\bigotimes_{i=1}^n \FFF(V_i)$ is a tuple of objects $(M_i,\rho_{M_i}) \in \FFF(V_i)$, for all $i=1,\dots,n$.
We endow the tensor product $\bigotimes_{i=1}^n M_i$ 
over $\bbK$ of the underlying $\O\big(Z(V_i)\big)$-dg-modules $M_i$
with the evident $\bigotimes_{i=1}^n \O\big(Z(V_i)\big)$-dg-module structure
and perform a change-of-base along the inclusion $\bigotimes_{i=1}^n \O\big(Z(V_i)\big)\subseteq \O\big(Z(V)\big)$.
This defines an $\O\big(Z(V)\big)$-dg-module which we endow with the 
tensor product $\O\big(\G(V)\big)$-coaction 
associated with $\rho : \O\big(Z(V)\big)\to \O\big(Z(V)\big)\otimes \O\big(\G(V)\big)$
and $\rho_{M_i} : M_i\to M_i\otimes\O\big(\G(V_i)\big)\to M_i\otimes \O\big(\G(V)\big)$, 
for all $i=1,\dots,n$, where the last step uses the inclusions 
$\O\big(\G(V_i)\big)\subseteq \O\big(\G(V)\big)$. This defines the object in
$\FFF(V)$ which is assigned by the dg-functor $\FFF\big(\iota_{\und{V}}^V\big)$ 
to $\big((M_1,\rho_{M_1}),\dots,(M_n,\rho_{M_n})\big)$.
On morphisms, the dg-functor $\FFF\big(\iota_{\und{V}}^{V}\big)$ is given
by taking tensor products over $\bbK$ and a
change-of-base along the inclusion 
$\bigotimes_{i=1}^n \O\big(Z(V_i)\big)\subseteq \O\big(Z(V)\big)$,
which preserves the coaction equivariance properties of the 
hom-complexes.
\begin{theo}\label{theo:PFA}
The construction above defines a $\dgCat$-valued 
prefactorization algebra $\FFF : \P_{\bbZ^2}\to \dgCat$
on the square lattice $\bbZ^2$. This prefactorization algebra is 
locally constant in the sense that $\FFF\big(\iota_V^{V^\prime}\big) : \FFF(V)\to\FFF(V^\prime)$
is a weak equivalence of dg-categories \cite{Tabuada}, for every $1$-ary operation
$\iota_V^{V^\prime} : V\to V^\prime$ in $\P_{\bbZ^2}$.
\end{theo}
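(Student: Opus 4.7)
The proof naturally splits into two tasks: verifying that the constructions in \eqref{eqn:Fonobjects}--\eqref{eqn:Fonnary} assemble into a pseudo-multifunctor $\FFF : \P_{\bbZ^2} \to \dgCat$, and deducing local constancy from Theorem \ref{theo:locallyconstant}. The plan is to deal with the operadic coherences first and then reduce local constancy to results already established in the paper.

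For the pseudo-multifunctorial structure, I would verify the axioms by direct inspection of the defining formulas. The symmetry axiom reduces to the symmetry of the tensor product $\bigotimes_i M_i$ of dg-modules over $\bbK$, since permuting the inputs $V_i$ induces the symmetric braiding on $\bigotimes_i \O\big(Z(V_i)\big)$ and hence, compatibly, on $\bigotimes_i M_i$. The unit axioms are straightforward because the $1$-ary operation $\iota_V^V$ is sent to the change-of-base along the identity inclusions of $\O\big(Z(V)\big)$ and $\O\big(\G(V)\big)$, which is canonically naturally isomorphic to the identity dg-functor on $\FFF(V)$. For operadic composition, given stacked operations $\iota_{\und V}^V$ and $\{\iota_{\und W_i}^{V_i}\}_{i=1}^n$ in $\P_{\bbZ^2}$ with composite $\iota_{\und W}^V : \und{W} = (\und{W}_1,\dots,\und{W}_n)\to V$, both the direct dg-functor $\FFF\big(\iota_{\und W}^V\big)$ and the composition $\FFF\big(\iota_{\und V}^V\big) \circ \bigotimes_{i=1}^n \FFF\big(\iota_{\und W_i}^{V_i}\big)$ are computed by iterated tensor products over $\bbK$ followed by change-of-base along the chain of commutative dg-algebra inclusions
\begin{flalign*}
\bigotimes_{i,j} \O\big(Z(W_{ij})\big) \,\subseteq\, \bigotimes_{i=1}^n \O\big(Z(V_i)\big) \,\subseteq\, \O\big(Z(V)\big)
\end{flalign*}
together with the analogous chain for the Hopf algebras $\O\big(\G(-)\big)$. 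Associativity of the relative tensor product then supplies the coherent natural isomorphisms required for the pseudo-multifunctor structure, and the pentagon and triangle coherences follow from those of the symmetric monoidal structure on dg-modules.

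For the local constancy claim, the argument is short and purely formal: Theorem \ref{theo:locallyconstant} guarantees that $\Sol\big(\iota_V^{V^\prime}\big) : \Sol(V^\prime) \to \Sol(V)$ is a weak equivalence in $\dSt$ for every $1$-ary operation $\iota_V^{V^\prime}$ in $\P_{\bbZ^2}$. As recalled in Subsection \ref{subsec:DAG}, the pseudo-functor $\Perf : \dSt^\op \to \dgCat$ from \eqref{eqn:PerfdSt} sends weak equivalences in $\dSt$ to weak equivalences in the model structure on $\dgCat$ from \cite{Tabuada}. Hence $\FFF\big(\iota_V^{V^\prime}\big) = \Perf\big(\Sol\big(\iota_V^{V^\prime}\big)\big)$ is a weak equivalence of dg-categories, as claimed. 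The explicit model \eqref{eqn:Perf[X/G]} is applicable to each $\Sol(V) = [Z(V)/\G(V)]$ because $\G(V) = \prod_{x\in V}\GL_n$ is a (possibly infinite) product of reductive affine group schemes.

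The main obstacle, such as it is, lies in the bookkeeping required for the operadic composition axiom, not in any conceptual difficulty. Concretely, one must check that the $\O\big(\G(V)\big)$-coactions on the tensor products $\bigotimes_i M_i$ are consistently induced via the inclusions $\O\big(\G(V_i)\big) \subseteq \O\big(\G(V)\big)$, and that the extra generators of $\O\big(Z(V)\big)$ which are not contained in $\bigotimes_i \O\big(Z(V_i)\big)$ (those supported on edges and ghost positions of $V$ not belonging to any $V_i$) interact correctly with the change-of-base when composing operations. This amounts to tracking coactions and differentials through the explicit formulas \eqref{eqn:OZYM} and \eqref{eqn:rhodCrit}, and selecting the coherence $2$-cells of $\FFF$ from the canonical associators of relative tensor products.
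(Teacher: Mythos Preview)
Your proposal is correct and follows essentially the same approach as the paper: the paper's proof likewise attributes pseudo-multifunctoriality to ``standard properties of tensor products of dg-modules'' (your explicit unpacking of the symmetry, unit, and composition coherences is just a more detailed rendering of this), and derives local constancy in exactly the way you describe, by combining Theorem~\ref{theo:locallyconstant} with the fact that $\Perf$ from \eqref{eqn:PerfdSt} preserves weak equivalences.
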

\begin{proof}
Pseudo-multifunctoriality of the assignment 
$\FFF : \P_{\bbZ^2}\to \dgCat$ defined in \eqref{eqn:Fonobjects},
\eqref{eqn:Fon1ary}, \eqref{eqn:Fon0ary} and \eqref{eqn:Fonnary}
is a consequence of standard properties of tensor
products of dg-modules, which in particular imply 
pseudo-functoriality of change-of-base functors.
Local constancy follows from the result
in Theorem \ref{theo:locallyconstant} that 
$\Sol\big(\iota_V^{V^\prime}\big) : \Sol(V^\prime)\to \Sol(V)$
is a weak equivalence in $\dSt$, for every morphism 
$\iota_V^{V^\prime} : V\to  V^\prime$ in $\Rect(\bbZ^2)$. Hence,
$\FFF\big(\iota_V^{V^\prime}\big) = \QCoh\big(\Sol\big(\iota_V^{V^\prime}\big) \big): 
\FFF(V)=\QCoh\big(\Sol(V)\big)\to\FFF(V^\prime)= \QCoh\big(\Sol(V^\prime)\big)$ is a weak equivalence
in $\dgCat$ since the pseudo-functor $\QCoh$ in \eqref{eqn:QCohdSt} preserves weak equivalences.
\end{proof}

\begin{rem}\label{rem:quantization}
The prefactorization algebra $\FFF : \P_{\bbZ^2}\to \dgCat$ from
Theorem \ref{theo:PFA} describes a natural choice of classical observables for our lattice Yang-Mills model.
These observables are modeled in terms of the dg-categories of quasi-coherent
complexes $\FFF(V)=\QCoh\big(\Sol(V)\big)\in\dgCat$ on the local derived critical loci 
$\Sol(V)\in\dSt$ of this theory. It would be interesting to study and describe
dg-categorical deformation quantizations as in \cite{ToenQuant,CPTVV} 
of this prefactorization algebra or, to match better their setting, of
its variant $\FFF_{\mathrm{per}}$ which assigns the dg-subcategories 
$\FFF_{\mathrm{per}}(V) = \Perf\big(\Sol(V)\big)\subseteq \QCoh\big(\Sol(V)\big)$
of perfect complexes.
The natural input datum for such quantization constructions is given in our case
by the $(-1)$-shifted Poisson structure which is canonically 
defined on a derived critical locus. According
to the shifted deformation quantization philosophy, 
this should yield an ``$\mathbb{E}_{-1}$''-monoidal quantization
of the canonical symmetric monoidal structure on 
the dg-category $\FFF(V)$. While $\mathbb{E}_n$-monoidal quantizations have 
a concrete definition and interpretation for non-negative $n\geq 0$,
their meaning in the case of negative $n<0$ is unclear to us.
\sk

An alternative pathway towards quantizing
the prefactorization algebra $\FFF : \P_{\bbZ^2}\to \dgCat$ 
from Theorem \ref{theo:PFA} is given by leveraging 
Poisson additivity \cite{Safronov}
along one of the two dimensions of $\bbZ^2$ in order
to turn the $(-1)$-shifted Poisson structure into an unshifted one.
The associated deformation quantization problem then consists
of quantizing $\FFF(V)$ as an $\mathbb{E}_0$-monoidal dg-category,
i.e.\ a dg-category with a distinguished object.
Such $\mathbb{E}_0$-monoidal quantizations have
been worked out explicitly in simple examples, see \cite{BPSquantization}.
We expect that implementing Poisson additivity \cite{Safronov}
through the rather explicit homotopical Green's operator methods developed in \cite{BMS}
could lead to unshifted Poisson structures whose quantization can be described rather
concretely. A particularly interesting challenge arising in this quantization
problem is to understand from the perspective of prefactorization algebras
how $2$-dimensional quantum Yang-Mills theory develops its area-dependent 
features \cite{Witten}, which as shown in Theorem \ref{theo:PFA} 
are not present in the corresponding classical model.
We hope to come back to this issue in our future work.
\end{rem}

\begin{rem}\label{rem:nottoolittle}
Our concept of locally constant prefactorization algebras on square lattices $\bbZ^n$
is similar to the ``not too little disks'' algebras studied 
by Calaque and Carmona \cite{CalaqueCarmona}.
These algebras are encoded by an operad
which describes Euclidean disks in $\bbR^n$ of radius greater than
some fixed minimal radius $R_{\mathrm{min}}>0$
and their mutually disjoint inclusions into each other. Local constancy is encoded by
the $\infty$-localization of this operad at all $1$-ary operations. The main result of 
\cite{CalaqueCarmona} states that such ``not too little disks'' algebras are equivalent to the usual
$\mathbb{E}_n$-algebras, for any choice of minimal radius $R_{\mathrm{min}}>0$.
The authors also apply their techniques to study examples of $1$-dimensional lattice
prefactorization algebras on $\bbZ^1$, see also \cite{Calaquelattice} for further $1$-dimensional examples. 
We would like to add that the updated final version of \cite[Section 2.4]{CalaqueCarmona}
now contains a precise comparison between our $\P_{\bbZ^n}$-algebras
from Definition \ref{def:PFA} and ``not too little cubes'' algebras.
In particular, it is shown that in the locally constant case our prefactorization 
algebras on $\bbZ^n$ give rise to $\mathbb{E}_n$-algebras.
\end{rem}

%%%%%%%%%%%%%%%%%%%%%%%%%%%%%%%%%%%%%%%%%%%%%%%%
%%%%%%%%%%%%%%%%%%%%%%%%%%%%%%%%%%%%%%%%%%%%%%%%

\section*{Acknowledgments}
We would like to thank Victor Carmona and Alastair Grant-Stuart for useful discussions.
We also thank the anonymous reviewer for their meticulous report
which contained valuable comments and suggestions that helped us to improve our paper.
The work of M.B.\ is supported in part by the MUR Excellence 
Department Project awarded to Dipartimento di Matematica, 
Universit{\`a} di Genova (CUP D33C23001110001) and it is fostered by 
the National Group of Mathematical Physics (GNFM-INdAM (IT)). 
T.F.\ would like to thank ENS Lyon for financial support
and the University of Nottingham for their hospitality
during his internship project which has led to this work.
A.S.\ was supported by the Royal Society (UK) through a Royal Society University 
Research Fellowship (URF\textbackslash R\textbackslash 211015)
and Enhancement Grants (RF\textbackslash ERE\textbackslash 210053 and 
RF\textbackslash ERE\textbackslash 231077).

%%%%%%%%%%%%%%%%%%%%%%%%%%%%%%%%%%%%%%%%%%%%%%%%
%%%%%%%%%%%%%%%%%%%%%%%%%%%%%%%%%%%%%%%%%%%%%%%%

\appendix

\section{\label{app:details}Cohomology computation for Theorem \ref{theo:locallyconstant}}
In this appendix we prove that the inclusion
$A\subseteq \widetilde{A}^\prime$ of the
commutative algebra $A = \O\big(Z^{\gf}(V)\big)^0$ from \eqref{eqn:AandA'} 
into the commutative dg-algebra $\widetilde{A}^\prime\subseteq \O\big(Z^{\gf}(V^\prime)\big)$
from \eqref{eqn:widetildeA'} is a weak equivalence in $\dgCAlg^{\leq 0}$.
Since $A$ is discrete, this amounts to showing that this inclusion
induces an isomorphism 
$A \cong\mathsf{H}^0\big(\widetilde{A}^\prime\big)$ in $0$-th cohomology
and that the non-zero cohomologies $\mathsf{H}^{<0}\big(\widetilde{A}^\prime\big)=0$
of $\widetilde{A}^\prime\in\dgCAlg^{\leq 0}$ are trivial.
\sk

We start with proving the statement about the $0$-th cohomology.
From \eqref{eqn:AandA'} and \eqref{eqn:widetildeA'}, one observes 
that $\widetilde{A}^\prime$ is generated (non-freely) over $A$ 
by $T(x_1,d+1)$ and $\xi_1(x_1,d)$, for all $x_1\in[a,b-1]$.
Using the explicit form of the differential \eqref{eqn:OZYMtg}, 
we can write every degree $0$ generator as a sum
\begin{flalign}
\nn T(x_1,d+1)\,&=\,T(x_1,d)\,T(x_1,d-1)^{-1}\,T(x_1,d) \\
&\qquad - \dd\Big(T(x_1,d+1)\,\xi_1(x_1,d)\,T(x_1,d-1)^{-1}\,T(x_1,d)\Big)
\end{flalign}
of an element in $A$ and an exact term.
From this it follows that $A \cong\mathsf{H}^0\big(\widetilde{A}^\prime\big)$.
\sk

It remains to prove that the non-zero cohomologies
$\mathsf{H}^{<0}\big(\widetilde{A}^\prime\big)=0$ are trivial.
For this it is convenient to observe, by using the definition 
of $E^\gf(x)$ in \eqref{eqn:Efieldgf}, 
that the degree $0$ commutative subalgebra $\widetilde{A}^{\prime\, 0}$
of $\widetilde{A}^{\prime}$ in \eqref{eqn:widetildeA'} 
can be expressed equivalently in terms of the variables
$E^{\gf}(x) = T(x+e_2)^{-1}\,T(x)\in \O(\GL_n)$, for all $x\in[a,b-1]\times[c,d]$,
and $T(x_1,c)\in \O(\GL_n)$, for all $x_1\in[a,b-1]$.
The benefit of this change of variables is that the differential
$\dd\xi_{1}(x_1,d) = E^\gf(x_1,d)-E^\gf(x_1,d-1)$ 
of the degree $-1$ generators of $\widetilde{A}^{\prime}$ is linear in these variables,
for all $x_1\in[a,b-1]$.
Let us introduce the auxiliary commutative dg-algebra
\begin{flalign}\label{eqn:widetildeB'}
\widetilde{B}^{\prime}\,:=\, \bigotimes_{x\in[a,b-1]\times\{c\}}\O(\GL_n)\otimes \!\!\bigotimes_{x\in [a,b-1]\times[c,d]}\!\!
\O(\mathbb{A}^{n\times n})\otimes\!\! \bigotimes_{x\in [a,b-1]\times\{d\}}\!\! \Sym\big(\gl_n[1]\big)\quad,
\end{flalign}
where the first tensor factor describes $T(x_1,c)\in \O(\GL_n)$
and the second tensor factor describes $E^{\gf}(x)\in \O(\mathbb{A}^{n\times n})$
\textit{without} localization at the determinants $\det(E^{\gf}(x))$. The differential on 
$\widetilde{B}^{\prime}$ is defined by $\dd\xi_{1}(x_1,d) = E^\gf(x_1,d)-E^\gf(x_1,d-1)$,
for all $x_1\in[a,b-1]$.
There exists an evident $\dgCAlg^{\leq 0}$-morphism
$\widetilde{B}^{\prime}\to \widetilde{A}^{\prime}$ 
whose degree-zero component $\widetilde{B}^{\prime\,0}\to \widetilde{A}^{\prime\,0}$ 
is a localization of the commutative algebra $\widetilde{B}^{\prime\,0}$ at the determinants $\det(E^{\gf}(x))$,
for all $x\in[a,b-1]\times[c,d]$. Note that under change-of-base along this morphism we have an
isomorphism
\begin{flalign}\label{eqn:changeofbaseApp}
\widetilde{A}^{\prime\,0} \otimes_{\widetilde{B}^{\prime\,0}} \widetilde{B}^{\prime}\,\cong\, \widetilde{A}^\prime
\end{flalign}
of commutative dg-algebras.
\sk

We now observe that \eqref{eqn:widetildeB'} is a free extension of a discrete
commutative dg-algebra by the generators 
$\xi_{1}(x_1,d)$ and $\dd\xi_{1}(x_1,d)=E^\gf(x_1,d)-E^\gf(x_1,d-1)$,
for all $x_1\in[a,b-1]$, hence the non-zero cohomologies
$\mathsf{H}^{<0}\big(\widetilde{B}^\prime\big)=0$ are all trivial. 
(See e.g.\ \cite[Lemma 2.1]{CDGA} for a proof of this standard fact.)
Since localizations $\widetilde{B}^{\prime\,0}\to \widetilde{A}^{\prime\,0}$ of commutative algebras
are flat, it follows that the change-of-base
functor $\widetilde{A}^{\prime\,0}\otimes_{\widetilde{B}^{\prime\,0}}(-)$ is exact, hence
the isomorphism in \eqref{eqn:changeofbaseApp} implies that the 
non-zero cohomologies
$\mathsf{H}^{<0}\big(\widetilde{A}^\prime\big)=0$ of $\widetilde{A}^\prime$ are all trivial too.

\end{document}